\documentclass{article}
\usepackage[a4paper,margin=1in]{geometry}
\usepackage{amsmath,amssymb,amsthm,mathtools}
\usepackage{hyperref}
\usepackage{microtype}
\usepackage{amsfonts}
\usepackage{mathrsfs}
\usepackage{enumerate}

\theoremstyle{plain}
\newtheorem{theorem}{Theorem}[section]
\newtheorem{proposition}[theorem]{Proposition}
\newtheorem{corollary}[theorem]{Corollary}
\newtheorem{lemma}[theorem]{Lemma}
\theoremstyle{definition}
\newtheorem{definition}{Definition}[section]

\theoremstyle{remark}
\newtheorem{remark}{\textbf{Remark}}[section]

\theoremstyle{definition}

\usepackage{chngcntr}
\counterwithout{table}{section}

\numberwithin{equation}{section}

\newcommand{\ket}[1]{\lvert #1\rangle}

\title{Quantum-assisted rendezvous on hypercubes: a quantum Bernoulli noise approach}
\author{Ce Wang\thanks{email: cewangwhu@163.com; orcid:0000-0002-2916-8478} \\
Shanghai Institute for Mathematics and Interdisciplinary Sciences (SIMIS)\\
Shanghai 200433, China}
\date{}

\begin{document}
\maketitle
\begin{abstract}
We study quantum-assisted balanced rendezvous on the
$d$-dimensional hypercube in a one-step two-agent quantum-walk
framework, where the two agents move the same Hamming distance and meet
at a common midpoint.
The key structure is a projective $t$-sphere resolution, which governs
exact finite-dimensional quantum attainment. Its scalar analogue
recovers the corresponding classical characterization. We characterize
finite-dimensional quantum attainment of the no-signalling benchmark,
construct quantum strategies from projective sphere resolutions, and
establish a rigidity theorem for optimal finite-dimensional strategies.
A marked difference appears between nearest-neighbor and
higher-cardinality updates: quantum advantage can occur in the former,
whereas in the latter the geometry of the relevant Hamming spheres
rules out exact finite-dimensional attainment of the no-signalling
value. Together, these results connect quantum-walk dynamics,
Hamming-sphere geometry, and quantum rendezvous within the
quantum Bernoulli noises (QBN) framework.
\end{abstract}

\vskip 2mm

\noindent\textbf{Keywords.}\ \ quantum Bernoulli noises; balanced rendezvous;
nonlocal games; hypercube; projective Hamming-sphere resolutions;
Hadamard matrices
\vskip 2mm

\noindent\textbf{Mathematics Subject Classification (2020).}\ 81P45, 81Q35, 05C57, 05B20

\section{Introduction}
\label{sec:introduction}
Rendezvous problems ask whether spatially separated agents, acting
without communication after receiving their inputs, can coordinate
their local moves so as to reach the same state. Classical rendezvous
search has a long history in search theory and graph-based coordination
\cite{Alpern1995}. Recent quantum versions have shown that shared
entanglement can improve rendezvous performance on specific graphs,
through both explicit protocols and experimental implementations, and
the setting has recently been extended to graph ensembles with
uncertain topology
\cite{Mironowicz2023,ViolaMironowicz2024,Tucker2024,
TuckerEtAl2026GraphEnsembles}. More broadly, nonlocal games provide a
natural framework for distributed quantum coordination
\cite{DingJiang2024,DingXu2026}, while families of projections indexed
by graph structures arise in the characterization of perfect quantum
strategies
\cite{MancinskaRoberson2016,CuiMancinskaNezhadiRoberson2025}.
These developments provide bounds, protocols, and structural
descriptions for a range of graph-based coordination problems. For
balanced rendezvous on the hypercube, however, it remains natural to
ask whether exact attainment of the no-signalling benchmark can be
characterized by an intrinsic combinatorial or operator structure.

We consider a balanced rendezvous problem in which Alice and Bob
receive two hypercube vertices at Hamming distance $2t$ and
independently choose updates of cardinality $t$, with the aim of
reaching the same vertex. We study this problem using quantum Bernoulli
noises (QBN). In the QBN framework, the operators
$\partial_k+\partial_k^*$ realize individual coordinate flips
\cite{WangChaiLu2010,WangYe2016,WangHypercube2022}, and hence provide a
direct operator realization of the allowed local updates. Rendezvous
occurs exactly when the two updates form a disjoint partition of the
relative coordinate $\eta=\sigma\triangle\tau$. For $t=1$, this is the
distance-two balanced version of nearest-neighbor rendezvous, whereas
$t>1$ corresponds to a single fixed-cardinality update rather than a
multi-step walk.

The QBN realization gives a one-step two-agent quantum-walk
description of the protocol, with the two walkers correlated through
their coin degrees of freedom and success determined by coincidence of
their final positions. Since rendezvous depends only on the common
final vertex, the associated conditions on the coin projections lead
to projective $t$-sphere resolutions, while scalar resolutions describe
the corresponding classical structure. The no-signalling benchmark can
be computed explicitly, and for $d>2t$ its finite-dimensional quantum
attainment is equivalent to the existence of a projective
$t$-sphere resolution. Equivalently, such attainment occurs precisely
when a rendezvous-compatible $t$-coin field exists. The scalar analogue
gives the corresponding criterion for classical saturation.

The behavior is different for $t=1$ and $t\ge2$. For $t=1$,
classical saturation occurs exactly when $d=2^m$, recovering the
classical total-perfect-code criterion for the hypercube
\cite{KwonLee2014}. On the quantum side, every
real Hadamard matrix of order $d$ yields a rank-one projective
$1$-sphere resolution and hence a finite-dimensional strategy attaining
$\omega_{\mathrm{NS}}(d)=2/d$. Thus, whenever a Hadamard matrix exists
and $d$ is not a power of two, the balanced rendezvous problem exhibits
a strict quantum advantage. For $t\ge2$ and $d>2t$, translation
symmetrization together with a rank argument rules out all
finite-dimensional projective $t$-sphere resolutions, and therefore
rules out exact finite-dimensional attainment of the no-signalling
value. This shows that the geometry of the possible meeting vertices
controls exact attainment: projective sphere resolutions permit a
quantum separation in the nearest-neighbor case, while the
higher-cardinality geometry obstructs the same mechanism for
$t\ge2$. Whether
$\omega_{\mathrm Q}(d,t)<\omega_{\mathrm{NS}}(d,t)$ and whether a
strict classical--quantum separation persists in this regime remain
open.

\section{Mathematical formulation of hypercube rendezvous}
\label{sec:problem}
We formulate the rendezvous problem using the power-set realization
of the \(d\)-dimensional hypercube. Let
\[
N_{d-1}:=\{0,\ldots,d-1\},
\qquad
\Gamma_{d-1}:=2^{N_{d-1}},
\]
so that each \(\sigma\in\Gamma_{d-1}\) represents a vertex of the
\(d\)-dimensional hypercube. For \(0\le r\le d\), write
\[
\Omega_r
:=
\{\sigma\in\Gamma_{d-1}:|\sigma|=r\}.
\]

We consider a balanced rendezvous problem in which the two agents start
at Hamming distance \(2t\), and each agent changes exactly \(t\)
coordinates. Thus both agents move the same Hamming distance toward a
common final vertex.

Fix
\(
1\le t\le\frac d2.
\)
The input pairs are
\begin{equation}
\mathcal I_{d,t}
:=
\left\{
(\sigma,\tau)\in\Gamma_{d-1}^2:
|\sigma\triangle\tau|=2t
\right\}.
\label{eq:balanced-input-set}
\end{equation}
For \((\sigma,\tau)\in\mathcal I_{d,t}\), define the relative set
\begin{equation}
\eta
:=
\sigma\triangle\tau
\in\Omega_{2t}.
\label{eq:relative-coordinate}
\end{equation}

Alice receives \(\sigma\), Bob receives \(\tau\), and no communication
is allowed after the inputs are given. Each player chooses a
\(t\)-element subset of \(N_{d-1}\). We denote these subsets by
\[
A,B\in\Omega_t.
\]
Their vertices are updated by symmetric difference:
\begin{equation}
\sigma'
:=
\sigma\triangle A,
\qquad
\tau'
:=
\tau\triangle B.
\label{eq:local-updates}
\end{equation}
The task succeeds when
\begin{equation}
\sigma'=\tau'.
\label{eq:success_condition}
\end{equation}

For $t=1$, this is the distance-two balanced version of
nearest-neighbor rendezvous on the hypercube. For \(t>1\), each player performs
a single fixed-cardinality update of size \(t\), rather than a
sequence of \(t\) nearest-neighbor moves.

Using \(\tau=\sigma\triangle\eta\),
equation \eqref{eq:success_condition} is equivalent to
\[
A=\eta\triangle B.
\]
Since
\[
|A|=|B|=t,
\qquad
|\eta|=2t,
\]
this is equivalent to
\begin{equation}
A\cap B=\varnothing,
\qquad
A\cup B=\eta.
\label{eq:balanced-rendezvous-condition}
\end{equation}
Thus successful rendezvous means that Alice and Bob partition the
\(2t\)-element set \(\eta\) into two disjoint \(t\)-subsets.
Define the success indicator by
\begin{equation}
\chi_{d,t}(\eta;A,B)
:=
\mathbf 1_{\{A\cap B=\varnothing,\ A\cup B=\eta\}}.
\label{eq:success-indicator}
\end{equation}
A strategy is described by its conditional joint distribution
\[
\mathbb P(A,B\mid\sigma,\tau),
\qquad
A,B\in\Omega_t,
\]
where \(\mathbb P(A,B\mid\sigma,\tau)\) is the probability that Alice chooses
\(A\) and Bob chooses \(B\), given the input vertices
\(\sigma\) and \(\tau\).

We assume that the  input pairs in
\(\mathcal I_{d,t}\) are chosen uniformly. Every such pair can be
written uniquely as
\[
(\sigma,\tau)
=
(\sigma,\sigma\triangle\eta),
\qquad
\sigma\in\Gamma_{d-1},
\quad
\eta\in\Omega_{2t}.
\]
Hence the success probability of a strategy \(\mathbb P\) is
\begin{equation}
\omega_{d,t}(\mathbb P)
:=
\frac{1}{|\mathcal I_{d,t}|}
\sum_{\sigma\in\Gamma_{d-1}}
\sum_{\eta\in\Omega_{2t}}
\sum_{A,B\in\Omega_t}
\chi_{d,t}(\eta;A,B)
\mathbb P(A,B\mid\sigma,\sigma\triangle\eta).
\label{eq:rendezvous-functional}
\end{equation}
Since
\[
|\mathcal I_{d,t}|
=
2^d\binom d{2t},
\]
the normalization in \eqref{eq:rendezvous-functional} is the
uniform average over all  input pairs.

Let
\[
\mathcal C_{d,t},
\qquad
\mathcal Q_{d,t},
\qquad
\mathcal{NS}_{d,t}
\]
denote the sets of conditional joint distributions induced by
classical, finite-dimensional quantum, and no-signalling strategies,
respectively. 

Following the standard convention for nonlocal games
\cite{CleveHoyerTonerWatrous2004,BrunnerEtAl2014,Slofstra2019},
we define
\begin{equation}
\begin{aligned}
\omega_{\mathrm C}(d,t)
&:=
\max_{\mathbb P\in\mathcal C_{d,t}}
\omega_{d,t}(\mathbb P),\\
\omega_{\mathrm Q}(d,t)
&:=
\sup_{\mathbb P\in\mathcal Q_{d,t}}
\omega_{d,t}(\mathbb P),\\
\omega_{\mathrm{NS}}(d,t)
&:=
\max_{\mathbb P\in\mathcal{NS}_{d,t}}
\omega_{d,t}(\mathbb P).
\end{aligned}
\label{eq:classical-quantum-ns-values}
\end{equation}
Since
\begin{equation}
\mathcal C_{d,t}
\subseteq
\mathcal Q_{d,t}
\subseteq
\mathcal{NS}_{d,t},
\label{eq:correlation-set-inclusions}
\end{equation}
we have
\begin{equation}
\omega_{\mathrm C}(d,t)
\le
\omega_{\mathrm Q}(d,t)
\le
\omega_{\mathrm{NS}}(d,t).
\label{eq:value-hierarchy}
\end{equation}
Here the classical and no-signalling optima are maxima because the
corresponding correlation sets are compact polytopes, whereas the
quantum value is written as a supremum since the finite-dimensional
quantum correlation set need not be closed.

The central question is whether quantum strategies can outperform
classical ones. Moreover, we ask when a finite-dimensional quantum strategy can
attain the no-signalling benchmark, and what operator structure is
required for such exact attainment.

%
\section{QBN formulation of the rendezvous conditions}
\label{sec:qbn-geometry}

We now pass from the combinatorial description of the hypercube to its
QBN realization introduced in \cite{WangHypercube2022}. Let
\(\mathfrak h\) be the Hilbert space of square-integrable Bernoulli
functionals with canonical orthonormal basis
\[
\{Z_\sigma:\sigma\subset\mathbb N,\ |\sigma|<\infty\}.
\]
For \(k\in\mathbb N\), the QBN annihilation and creation operators act
on the Bernoulli basis by
\[
\partial_kZ_\sigma
=
\mathbf1_\sigma(k)Z_{\sigma\setminus\{k\}},
\qquad
\partial_k^*Z_\sigma
=
\bigl(1-\mathbf1_\sigma(k)\bigr)Z_{\sigma\cup\{k\}}.
\]
Restricting to the first \(d\) coordinates, set
\[
\mathfrak h_{d-1}
:=
\operatorname{span}
\{Z_\sigma:\sigma\in\Gamma_{d-1}\}.
\]
Thus every hypercube vertex
\(\sigma\in\Gamma_{d-1}\) is represented by the Bernoulli basis vector
\(Z_\sigma\in\mathfrak h_{d-1}\).

For \(k\in N_{d-1}\), define the QBN shift
\[
\Xi_k
:=
\partial_k+\partial_k^*.
\]
It satisfies
\[
\Xi_kZ_\sigma
=
Z_{\sigma\triangle\{k\}},
\qquad
\Xi_k^*=\Xi_k,
\qquad
\Xi_k^2=I,
\qquad
\Xi_j\Xi_k=\Xi_k\Xi_j.
\]
Hence, for \(A\subseteq N_{d-1}\), the fixed-cardinality update
\(\sigma\mapsto\sigma\triangle A\) is realized by
\begin{equation}
\Xi_A
:=
\prod_{k\in A}\Xi_k,
\qquad
\Xi_AZ_\sigma
=
Z_{\sigma\triangle A}.
\label{eq:qbn-coordinate-update}
\end{equation}
Thus the local update rule
\eqref{eq:local-updates} has an exact QBN operator realization.

The rendezvous condition also admits a direct QBN representation.
Indeed,
\[
\Xi_A\Xi_BZ_\eta
=
Z_{\eta\triangle A\triangle B},
\]
and therefore
\begin{equation}
\chi_{d,t}(\eta;A,B)
=
\left|
\left\langle
Z_\varnothing,
\Xi_A\Xi_BZ_\eta
\right\rangle
\right|^2.
\label{eq:balanced-winning-kernel}
\end{equation}
Thus the successful branches are exactly those for which the combined
QBN shifts return the relative-coordinate state \(Z_\eta\) to the
vacuum vector \(Z_\varnothing\).

For later use, let
\[
V_t
:=
\operatorname{span}
\{Z_A:A\in\Omega_t\}
\subset\mathfrak h_{d-1}.
\]
The disjointness condition \(A\cap B=\varnothing\), which is part of
\eqref{eq:balanced-rendezvous-condition}, induces a useful connectivity
property among the Bernoulli basis vectors of \(V_t\).

\begin{proposition}
\label{prop:even-disjointness-chain}
Assume \(d>2t\). Let \(Z_A,Z_{A'}\in V_t\), and set
\[
m
:=
|A\setminus A'|
=
|A'\setminus A|.
\]
Then there exist
\[
A_0,B_0,A_1,B_1,\ldots,B_{m-1},A_m\in\Omega_t
\]
with
\[
A_0=A,
\qquad
A_m=A',
\]
such that
\begin{equation}
A_i\cap B_i=\varnothing,
\qquad
B_i\cap A_{i+1}=\varnothing,
\qquad
0\le i<m.
\label{eq:even-disjointness-chain}
\end{equation}
Consequently, \(Z_A\) and \(Z_{A'}\) can be connected by a chain of
\(2m\) disjointness transitions in \(V_t\).
\end{proposition}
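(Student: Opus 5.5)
The plan is to induct on $m$ by proving a one-step exchange lemma. Suppose $A,A''\in\Omega_t$ satisfy $|A\setminus A''|=1$, say $A''=(A\setminus\{a\})\cup\{a'\}$ with $a\in A$ and $a'\notin A$. Then there exists $B\in\Omega_t$ with $A\cap B=\varnothing$ and $B\cap A''=\varnothing$.

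The proof of the lemma is a counting argument. The set $A\cup A''=A\cup\{a'\}$ has exactly $t+1$ elements. Its complement in $N_{d-1}$ therefore has $d-t-1$ elements. Since $d>2t$, that is $d\ge 2t+1$, we get $d-t-1\ge t$. So we may choose any $t$-subset $B$ of $N_{d-1}\setminus(A\cup A'')$, and it is disjoint from both $A$ and $A''$. This is the only place where the hypothesis $d>2t$ enters. It is also the step I expect to be the crux: it fails when $d=2t$, because then the complement of $A$ is forced to be the unique $B$, and $A''\neq A$ meets that complement.

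For the induction, write $A\setminus A'=\{a_1,\dots,a_m\}$ and $A'\setminus A=\{a_1',\dots,a_m'\}$. Define
\[
A_i:=\bigl(A\setminus\{a_1,\dots,a_i\}\bigr)\cup\{a_1',\dots,a_i'\},\qquad 0\le i\le m .
\]
Each $A_i$ lies in $\Omega_t$, because the $a_j$ belong to $A$ while the $a_j'$ do not. Moreover $A_0=A$ and $A_m=A'$. Consecutive sets differ by a single exchange $a_{i+1}\mapsto a_{i+1}'$, so $|A_i\setminus A_{i+1}|=1$. Applying the lemma to the pair $(A_i,A_{i+1})$ produces $B_i\in\Omega_t$ with $A_i\cap B_i=\varnothing$ and $B_i\cap A_{i+1}=\varnothing$. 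This is exactly \eqref{eq:even-disjointness-chain}.

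Finally, the chain $Z_{A_0},Z_{B_0},Z_{A_1},\dots,Z_{B_{m-1}},Z_{A_m}$ lies in $V_t$. Each consecutive pair of indexing sets is disjoint, and there are $2m$ such transitions. This gives the concluding sentence of the proposition. The degenerate case $m=0$, where $A=A'$, is trivial: the chain is empty.
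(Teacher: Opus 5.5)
Your proposal is correct and follows essentially the same route as the paper. It uses the same interpolating sets $A_i$ and the same count $|N_{d-1}\setminus(A_i\cup A_{i+1})|=d-t-1\ge t$ to choose each $B_i$, and your side remark that this count fails when $d=2t$ is also correct.
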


\begin{proof}
Write
\[
A\setminus A'
=
\{a_1,\ldots,a_m\},
\qquad
A'\setminus A
=
\{a'_1,\ldots,a'_m\},
\]
and define
\[
A_i
:=
\bigl(
A\setminus\{a_1,\ldots,a_i\}
\bigr)
\cup
\{a'_1,\ldots,a'_i\},
\qquad
0\le i\le m.
\]
Then \(A_i\in\Omega_t\),
\(A_0=A\), \(A_m=A'\), and
\[
|A_i\triangle A_{i+1}|=2.
\]
Hence
\[
|A_i\cup A_{i+1}|=t+1.
\]
Since \(d>2t\), we have \(d\ge2t+1\), and therefore
\[
\left|
N_{d-1}\setminus(A_i\cup A_{i+1})
\right|
=
d-t-1
\ge t.
\]
We may thus choose
\[
B_i
\in\Omega_t,
\qquad
B_i
\subseteq
N_{d-1}\setminus(A_i\cup A_{i+1}).
\]
It follows that
\[
A_i\cap B_i=\varnothing,
\qquad
B_i\cap A_{i+1}=\varnothing.
\]
Thus
\[
Z_{A_i}
\longrightarrow
Z_{B_i}
\longrightarrow
Z_{A_{i+1}}
\]
is a pair of consecutive disjointness transitions. Repeating this
construction for \(i=0,\ldots,m-1\) gives
\[
Z_{A_0}
\longrightarrow
Z_{B_0}
\longrightarrow
Z_{A_1}
\longrightarrow
\cdots
\longrightarrow
Z_{B_{m-1}}
\longrightarrow
Z_{A_m}.
\]
The resulting chain connects \(Z_A\) to \(Z_{A'}\) through exactly
\(2m\) disjointness transitions.
\end{proof}

Proposition~\ref{prop:even-disjointness-chain} provides the
connectivity property needed in the rigidity argument of
Theorem~\ref{thm:quantum-saturation-rigidity}. Together with
\eqref{eq:qbn-coordinate-update} and
\eqref{eq:balanced-winning-kernel}, it gives the QBN operator
description required for the rendezvous problem. We next ask how
these local QBN updates can be organized by their common final vertex.
\section{Updates and projective sphere resolutions}
\label{sec:coin-systems}
For $A\in\Omega_t$, the update
\[
\sigma\longmapsto\sigma\triangle A
\]
is realized by the QBN shift $\Xi_A$ and moves $\sigma$ to a vertex at
Hamming distance $t$. In particular, for $t=1$, each allowed local
update is a single nearest-neighbor step of the QBN quantum walk, with
the singleton update $\{k\}$ implemented by
$\Xi_k$ \cite{WangHypercube2022}.

To encode these updates in the quantum model, let
\(\mathcal K\) be a finite-dimensional Hilbert space and associate a
coin operator \(C_A\in\mathcal B(\mathcal K)\) with each
\(A\in\Omega_t\). Set
\[
\mathcal C
=
\{C_A:A\in\Omega_t\}.
\]
We call \(\mathcal C\) a coin operator system for \(t\)-element
updates if
\[
C_A^*C_B
=
C_AC_B^*
=
0,
\qquad
A\ne B,
\]
and
\[
U_{\mathcal C}
:=
\sum_{A\in\Omega_t}C_A
\]
is unitary.

For
\[
P_A:=C_A^*C_A,
\]
these conditions are equivalent to
\begin{equation}
P_AP_B=0
\quad(A\ne B),
\qquad
\sum_{A\in\Omega_t}P_A=I,
\qquad
C_A=U_{\mathcal C}P_A.
\label{eq:coin-polar-form}
\end{equation}

\begin{proposition}
\label{prop:t-coordinate-qbn-operator}
Let
\[
\mathcal C=\{C_A:A\in\Omega_t\}
\]
be a coin operator system for \(t\)-element updates. Then
\begin{equation}
W_{\mathcal C}^{[t]}
:=
\sum_{A\in\Omega_t}
\Xi_A\otimes C_A
\label{eq:t-coordinate-qbn-operator}
\end{equation}
is unitary on \(\mathfrak h_{d-1}\otimes\mathcal K\), and
\begin{equation}
W_{\mathcal C}^{[t]}
(Z_\sigma\otimes\xi)
=
\sum_{A\in\Omega_t}
Z_{\sigma\triangle A}\otimes C_A\xi,
\qquad
\xi\in\mathcal K.
\label{eq:t-coordinate-qbn-operator-action}
\end{equation}
\end{proposition}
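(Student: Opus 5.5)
The action formula \eqref{eq:t-coordinate-qbn-operator-action} follows at once from \eqref{eq:qbn-coordinate-update}, since $(\Xi_A\otimes C_A)(Z_\sigma\otimes\xi)=Z_{\sigma\triangle A}\otimes C_A\xi$; summing over $A\in\Omega_t$ gives the claim. The substance is unitarity. Because $\mathfrak h_{d-1}\otimes\mathcal K$ is finite-dimensional, it suffices to show $W^*W=I$. Still, I will check $WW^*=I$ as well, since the two computations are symmetric.

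First I compute $W^*W$. Using $\Xi_A^*=\Xi_A$, I get
\[
W^*W=\sum_{A,B\in\Omega_t}\Xi_A\Xi_B\otimes C_A^*C_B .
\]
The coin system hypothesis $C_A^*C_B=0$ for $A\ne B$ kills every off-diagonal term. The diagonal terms involve $\Xi_A^2$, and since the $\Xi_k$ commute and satisfy $\Xi_k^2=I$, we have $\Xi_A^2=I$. Hence
\[
W^*W=I\otimes\sum_A C_A^*C_A=I\otimes\sum_A P_A .
\]
By \eqref{eq:coin-polar-form}, $\sum_A P_A=I$. Alternatively, this follows directly from the hypotheses: $U_{\mathcal C}^*U_{\mathcal C}=\sum_{A,B}C_A^*C_B=\sum_A C_A^*C_A$, and unitarity of $U_{\mathcal C}$ gives $I$.

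Symmetrically,
\[
WW^*=\sum_{A,B}\Xi_A\Xi_B\otimes C_AC_B^* .
\]
The condition $C_AC_B^*=0$ for $A\neq B$ again leaves only the diagonal terms, so
\[
WW^*=I\otimes\sum_A C_AC_A^*=I\otimes U_{\mathcal C}U_{\mathcal C}^*=I .
\]

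No step is a genuine obstacle. The only point needing care is that the cross terms vanish because of the coin orthogonality relations rather than any property of the shifts, since $\Xi_A\Xi_B=\Xi_{A\triangle B}\neq 0$. It is therefore essential to use the orthogonality conditions on the coin factor, and to use $\Xi_A^2=I$ only on the diagonal.
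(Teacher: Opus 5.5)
Your proof is correct and follows the same route as the paper. You expand $W^*W$ and $WW^*$, kill the cross terms with $C_A^*C_B=C_AC_B^*=0$, use $\Xi_A^2=I$ together with $\sum_A C_A^*C_A=U_{\mathcal C}^*U_{\mathcal C}=I$ and $\sum_A C_AC_A^*=U_{\mathcal C}U_{\mathcal C}^*=I$ on the diagonal, and read off the action formula from \eqref{eq:qbn-coordinate-update}; your explicit treatment of $WW^*$ simply fills in the step the paper covers with ``similarly.''
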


\begin{proof}
Since every \(\Xi_A\) is a self-adjoint unitary,
\begin{align*}
(W_{\mathcal C}^{[t]})^*
W_{\mathcal C}^{[t]}
&=
\sum_{A,B\in\Omega_t}
\Xi_A\Xi_B\otimes C_A^*C_B\\
&=
I\otimes
\sum_{A\in\Omega_t}C_A^*C_A\\
&=
I.
\end{align*}
Similarly,
\[
W_{\mathcal C}^{[t]}
(W_{\mathcal C}^{[t]})^*
=
I.
\]
Equation \eqref{eq:t-coordinate-qbn-operator-action} follows from
\eqref{eq:qbn-coordinate-update}.
\end{proof}

\begin{remark}
The superscript \([t]\) specifies the update distance, not the number
of time steps. In particular,
\[
W_{\mathcal C}^{[t]}
\ne
\bigl(W_{\mathcal C}^{[1]}\bigr)^t.
\]
\end{remark}

For fixed \(\sigma\in\Gamma_{d-1}\), each \(A\in\Omega_t\) determines
the final vertex
\[
\zeta=\sigma\triangle A.
\]
Since successful rendezvous requires the two players to have the
same \(\zeta\), we associate a projection \(Q_\zeta\) with each
\(\zeta\in\Gamma_{d-1}\). Since
\[
\{\sigma\triangle A:A\in\Omega_t\}
=
\{\zeta\in\Gamma_{d-1}:|\sigma\triangle\zeta|=t\},
\]
this leads to the following resolution condition.

\begin{definition}
\label{def:projective-t-sphere}
A projective \(t\)-sphere resolution on a finite-dimensional Hilbert
space \(\mathcal K\) is a family of projections
\[
Q_\zeta
=
Q_\zeta^*
=
Q_\zeta^2
\in\mathcal B(\mathcal K),
\qquad
\zeta\in\Gamma_{d-1},
\]
such that
\begin{equation}
\sum_{A\in\Omega_t}
Q_{\sigma\triangle A}
=
I_{\mathcal K},
\qquad
\sigma\in\Gamma_{d-1}.
\label{eq:projective-t-sphere}
\end{equation}
\end{definition}

We next require the domain projection of the coin operator associated
with \(A\) at input \(\sigma\) to depend only on
\(\sigma\triangle A\).

\begin{definition}
\label{def:compatible-coin-field}
A rendezvous-compatible \(t\)-coin field is a family
\[
\mathcal C^{(\sigma)}
=
\{C_A^{(\sigma)}:A\in\Omega_t\},
\qquad
\sigma\in\Gamma_{d-1},
\]
such that each \(\mathcal C^{(\sigma)}\) is a coin operator system for
\(t\)-element updates and there exists a family of projections
\[
\{Q_\zeta:\zeta\in\Gamma_{d-1}\}
\]
satisfying
\begin{equation}
C_A^{(\sigma)*}C_A^{(\sigma)}
=
Q_{\sigma\triangle A},
\qquad
\sigma\in\Gamma_{d-1},
\quad
A\in\Omega_t.
\label{eq:meeting-point-compatibility}
\end{equation}
\end{definition}

\begin{theorem}
\label{thm:coin-resolution-equivalence}
A projective \(t\)-sphere resolution on \(\mathcal K\) exists if and
only if a rendezvous-compatible \(t\)-coin field exists.

More precisely, if
\(\{Q_\zeta\}_{\zeta\in\Gamma_{d-1}}\)
is a projective \(t\)-sphere resolution and
\(\{U_\sigma\}_{\sigma\in\Gamma_{d-1}}\)
is any family of unitaries on \(\mathcal K\), then
\begin{equation}
C_A^{(\sigma)}
:=
U_\sigma Q_{\sigma\triangle A}
\label{eq:coin-field-from-resolution}
\end{equation}
defines a rendezvous-compatible \(t\)-coin field.
\end{theorem}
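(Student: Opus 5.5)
The proof has two directions. For the forward direction I use the explicit construction \eqref{eq:coin-field-from-resolution}. For the reverse direction I read off the projections $Q_\zeta$ from the compatibility condition \eqref{eq:meeting-point-compatibility} and check that they form a projective $t$-sphere resolution. In both directions the key tool is the polar-form equivalence \eqref{eq:coin-polar-form}.

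\emph{From a resolution to a coin field.} Let $\{Q_\zeta\}$ be a projective $t$-sphere resolution, and fix $\sigma\in\Gamma_{d-1}$. Put $P_A:=Q_{\sigma\triangle A}$ for $A\in\Omega_t$. The map $A\mapsto\sigma\triangle A$ is injective, so the $P_A$ are the distinct projections $Q_\zeta$ with $|\sigma\triangle\zeta|=t$. By \eqref{eq:projective-t-sphere} they sum to $I_{\mathcal K}$.

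The key step is mutual orthogonality. A finite family of projections summing to the identity is automatically pairwise orthogonal. To see this, compress the identity $\sum_B P_B=I$ by a fixed $P_A$:
\[
P_A=\sum_{B}P_AP_BP_A=P_A+\sum_{B\ne A}(P_BP_A)^*(P_BP_A).
\]
The remaining sum is a sum of positive operators equal to $0$, so each term vanishes. Hence $P_BP_A=0$ for $B\ne A$.

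Now set $C_A^{(\sigma)}=U_\sigma P_A$. Then
\[
C_A^{(\sigma)*}C_B^{(\sigma)}=P_AU_\sigma^*U_\sigma P_B=P_AP_B=0,
\qquad
C_A^{(\sigma)}C_B^{(\sigma)*}=U_\sigma P_AP_BU_\sigma^*=0
\]
for $A\ne B$. Moreover $\sum_A C_A^{(\sigma)}=U_\sigma\sum_A P_A=U_\sigma$ is unitary. So $\mathcal C^{(\sigma)}$ is a coin operator system for $t$-element updates. Finally $C_A^{(\sigma)*}C_A^{(\sigma)}=P_A=Q_{\sigma\triangle A}$, which is \eqref{eq:meeting-point-compatibility}. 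This proves the "more precisely" statement. Taking $U_\sigma=I$ gives existence.

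\emph{From a coin field to a resolution.} Suppose a rendezvous-compatible $t$-coin field exists, with projections $\{Q_\zeta\}$ as in Definition~\ref{def:compatible-coin-field}. Fix $\sigma\in\Gamma_{d-1}$. By \eqref{eq:coin-polar-form} applied to $\mathcal C^{(\sigma)}$,
\[
\sum_{A\in\Omega_t}C_A^{(\sigma)*}C_A^{(\sigma)}=I_{\mathcal K}.
\]
By \eqref{eq:meeting-point-compatibility} this reads $\sum_{A\in\Omega_t}Q_{\sigma\triangle A}=I_{\mathcal K}$, which is \eqref{eq:projective-t-sphere}. The family $Q_\zeta$ is by hypothesis a family of projections on the same finite-dimensional space $\mathcal K$, so it is a projective $t$-sphere resolution.

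The only point requiring argument is the automatic orthogonality step in the forward direction. The definition of a projective $t$-sphere resolution imposes only the sum condition, while a coin system needs $P_AP_B=0$. The compression identity above supplies this, and everything else is direct substitution.
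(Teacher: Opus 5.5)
Your proof is correct and follows the paper's argument: in the forward direction you set \(C_A^{(\sigma)}=U_\sigma Q_{\sigma\triangle A}\) and verify the coin-system axioms and compatibility, and in the converse you sum \(C_A^{(\sigma)*}C_A^{(\sigma)}\) to recover the sphere condition. The only difference is your compression argument for the pairwise orthogonality of the \(Q_{\sigma\triangle A}\), which the paper simply asserts when it says these projections ``form an orthogonal resolution''; spelling it out fills a step the paper leaves implicit.
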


\begin{proof}
Suppose first that
\(\{Q_\zeta\}_{\zeta\in\Gamma_{d-1}}\)
satisfies \eqref{eq:projective-t-sphere}. For fixed \(\sigma\), the
projections
\[
\{Q_{\sigma\triangle A}:A\in\Omega_t\}
\]
form an orthogonal resolution of \(\mathcal K\). Hence, for
\(A\ne B\),
\[
C_A^{(\sigma)*}C_B^{(\sigma)}
=
Q_{\sigma\triangle A}
Q_{\sigma\triangle B}
=
0,
\]
and
\[
C_A^{(\sigma)}
C_B^{(\sigma)*}
=
U_\sigma
Q_{\sigma\triangle A}
Q_{\sigma\triangle B}
U_\sigma^*
=
0.
\]
Moreover,
\[
\sum_{A\in\Omega_t}
C_A^{(\sigma)}
=
U_\sigma
\sum_{A\in\Omega_t}
Q_{\sigma\triangle A}
=
U_\sigma.
\]
Thus each \(\mathcal C^{(\sigma)}\) is a coin operator system, and
\eqref{eq:meeting-point-compatibility} holds.

Conversely, let a rendezvous-compatible \(t\)-coin field be given.
Then, for every \(\sigma\in\Gamma_{d-1}\),
\[
I_{\mathcal K}
=
\sum_{A\in\Omega_t}
C_A^{(\sigma)*}C_A^{(\sigma)}
=
\sum_{A\in\Omega_t}
Q_{\sigma\triangle A},
\]
which is \eqref{eq:projective-t-sphere}.
\end{proof}

For each \(\sigma\in\Gamma_{d-1}\), the compatible coin system
\(\mathcal C^{(\sigma)}\) defines
\begin{equation}
W_{\mathcal C^{(\sigma)}}^{[t]}
:=
\sum_{A\in\Omega_t}
\Xi_A\otimes C_A^{(\sigma)}.
\label{eq:input-conditioned-qbn-update}
\end{equation}
These operators give the local quantum updates for the two players.
Accordingly, the rendezvous problem may be viewed as a one-step
two-agent QBN quantum walk on the hypercube, with rendezvous
corresponding to coincidence of the two final positions. In the next
section, we combine these local updates with a shared entangled state
and show how the resulting position measurements induce a quantum
strategy for the rendezvous game.

\section{Saturation criteria and rigidity}
\label{sec:saturation-rigidity}
We first determine the information-theoretic benchmark independently
of the QBN construction. The no-signalling relaxation of rendezvous
games was considered in
\cite{Mironowicz2023,Tucker2024}. For the balanced
fixed-cardinality model considered here, the corresponding value has
the following explicit form.

We say that a
classical or quantum strategy \emph{saturates the no-signalling
benchmark} if its success probability equals
\(\omega_{\mathrm{NS}}(d,t)\).
\begin{lemma}
\label{lem:NS-general}
For \(1\le t\le d/2\),
\begin{equation}
\omega_{\mathrm{NS}}(d,t)
=
\frac{\binom{d-t}{t}}{\binom d{2t}}
=
\frac{\binom{2t}{t}}{\binom dt}.
\label{eq:NS-general-value}
\end{equation}
\end{lemma}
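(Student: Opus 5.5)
Throughout we use the identity
\[
\binom{d}{2t}\binom{2t}{t}=\binom dt\binom{d-t}{t},
\]
since both sides count pairs \((A,B)\) of disjoint \(t\)-subsets of \(N_{d-1}\). This gives the equality of the two expressions in \eqref{eq:NS-general-value}. The plan has two parts: an upper bound on every no-signalling strategy, and an explicit no-signalling strategy attaining it.

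\textbf{Upper bound.} Let \(\mathbb P\in\mathcal{NS}_{d,t}\) and fix Alice's input \(\sigma\). By no-signalling, Alice's marginal \(p_\sigma(A):=\sum_B\mathbb P(A,B\mid\sigma,\tau)\) does not depend on \(\tau\). For the pair \((\sigma,\sigma\triangle\eta)\), success forces \(A\subseteq\eta\) and \(B=\eta\setminus A\). Hence the winning probability is at most
\[
\sum_{A\subseteq\eta,\,|A|=t}p_\sigma(A).
\]
Summing this over \(\eta\in\Omega_{2t}\), each \(A\in\Omega_t\) lies in exactly \(\binom{d-t}{t}\) sets \(\eta\). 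The sum is therefore at most \(\binom{d-t}{t}\sum_A p_\sigma(A)=\binom{d-t}{t}\). Summing over \(\sigma\) and dividing by \(|\mathcal I_{d,t}|=2^d\binom d{2t}\) gives \(\omega_{d,t}(\mathbb P)\le\binom{d-t}t/\binom d{2t}\). Only Alice's marginal is used here, so this step is straightforward.

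\textbf{Attainment.} I would define
\[
\mathbb P(A,B\mid\sigma,\tau):=\binom dt^{-1}\mathbf 1_{\{\sigma\triangle A=\tau\triangle B\}}
\]
for all \(\sigma,\tau\in\Gamma_{d-1}\), and \(\mathbb P(\cdot\mid\sigma,\tau):=\binom dt^{-2}\) (independent uniform choices) when \(|\sigma\triangle\tau|\ne 2t\). This restricted form is not forced, since the game only samples \(\mathcal I_{d,t}\); the off-input completion just keeps marginals uniform.

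The first check is normalization on \(\mathcal I_{d,t}\): Alice's marginal is uniform only if every \(A\in\Omega_t\) has exactly one \(B\) with \(\sigma\triangle A=\tau\triangle B\). That \(B\) is \(A\triangle\eta\), but it has cardinality \(t\) only when \(A\) meets \(\eta\) in exactly \(t\) elements. So the naive choice fails whenever \(A\not\subseteq\eta\), and it must be repaired.

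\textbf{Repaired strategy.} On inputs with \(\eta=\sigma\triangle\tau\in\Omega_{2t}\), set
\[
\mathbb P(A,B\mid\sigma,\tau)=\binom dt^{-1}\,\mu_\eta(B\mid A),
\]
where \(\mu_\eta(\cdot\mid A)\) is a probability distribution on \(\Omega_t\) chosen as follows. If \(A\subseteq\eta\), take \(\mu_\eta(\cdot\mid A)\) to be the point mass at \(\eta\setminus A\). If \(A\not\subseteq\eta\), choose \(\mu_\eta(\cdot\mid A)\) on the sets \(B\not\subseteq\eta\) so that Bob's marginal is also uniform.

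Bob's marginal is uniform provided \(\sum_A\mu_\eta(B\mid A)=1\) for every \(B\), i.e. \(\mu_\eta\) is a doubly stochastic matrix indexed by \(\Omega_t\). On the block \(\{A\subseteq\eta\}\) we have the permutation \(A\mapsto\eta\setminus A\). On the complementary block, the uniform doubly stochastic matrix (or the identity) works, because both blocks have the same size \(\binom dt-\binom{2t}t\).

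With both marginals uniform on every input, and the same uniform marginals chosen off \(\mathcal I_{d,t}\), no-signalling holds. The success probability is then \(\binom{2t}t/\binom dt\) for every input pair, matching the upper bound.

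The main obstacle is this construction of a genuinely no-signalling completion in the attainment step; the doubly stochastic block decomposition resolves it.
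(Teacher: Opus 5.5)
Your proposal is correct and follows essentially the paper's route. The upper bound uses only Alice's no-signalling marginal and the count $\binom{d-t}{t}$; the paper indexes winning configurations by disjoint pairs $(A,B)$ rather than by $A\subseteq\eta$, but this is the same sum under $\eta=A\cup B$. Attainment extends the bijection $A\mapsto\eta\setminus A$ on $\{A\subseteq\eta\}$ to a doubly stochastic matrix with uniform marginals, where the paper takes a permutation $\pi_\eta$ of $\Omega_t$ (your identity-on-the-complement choice), and your discarded naive attempt can simply be dropped.
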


The proof is given in
Appendix~\ref{app:no-signalling}.

Classical saturation admits a corresponding characterization in terms
of scalar sphere resolutions.

\begin{proposition}
\label{prop:classical-general}
Assume \(d>2t\). Then
\begin{equation}
\omega_{\mathrm C}(d,t)
=
\omega_{\mathrm{NS}}(d,t)
\label{eq:classical-general-saturation}
\end{equation}
if and only if there exists a function
\[
q:\Gamma_{d-1}\longrightarrow\{0,1\}
\]
such that
\begin{equation}
\sum_{A\in\Omega_t}
q(\sigma\triangle A)
=
1,
\qquad
\sigma\in\Gamma_{d-1}.
\label{eq:scalar-t-sphere-resolution}
\end{equation}
That is, classical saturation occurs if and only if a scalar
\(t\)-sphere resolution exists.
\end{proposition}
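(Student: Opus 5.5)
Since $\mathcal C_{d,t}$ is the convex hull of deterministic strategies and $\omega_{d,t}$ is linear, $\omega_{\mathrm C}(d,t)$ is attained by a deterministic pair $a,b:\Gamma_{d-1}\to\Omega_t$. The plan is to fix such a pair, compute its success probability through meeting points, and show that it can equal $\binom{d-t}{t}/\binom d{2t}$ only if a scalar $t$-sphere resolution exists. The converse direction is a direct construction.

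\emph{Sufficiency.} Suppose $q$ satisfies \eqref{eq:scalar-t-sphere-resolution}. For each $\sigma$ there is then a unique $A\in\Omega_t$ with $q(\sigma\triangle A)=1$. Both players use this same rule: Alice moves to the unique $\zeta$ with $|\sigma\triangle\zeta|=t$ and $q(\zeta)=1$, and Bob does the same from $\tau$. They meet whenever some $\zeta$ with $q(\zeta)=1$ lies at distance $t$ from both $\sigma$ and $\tau$. Such a $\zeta$ is unique, because $\sigma$ has only one marked point at distance $t$. For fixed $\sigma$, the marked point $\zeta=\sigma\triangle A$ is a common midpoint for exactly those $\eta\in\Omega_{2t}$ with $A\subseteq\eta$. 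There are $\binom{d-t}{t}$ such $\eta$. Averaging over $\eta$ gives success probability $\binom{d-t}{t}/\binom d{2t}$, which equals $\omega_{\mathrm{NS}}$ by Lemma~\ref{lem:NS-general}.

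\emph{Necessity.} Fix a deterministic pair $(a,b)$, and write $\alpha(\sigma)=\sigma\triangle a(\sigma)$ and $\beta(\tau)=\tau\triangle b(\tau)$. The pair wins on $(\sigma,\tau)$ exactly when $\alpha(\sigma)=\beta(\tau)=:\zeta$. In that case $|\sigma\triangle\zeta|=|\tau\triangle\zeta|=t$ and $|\sigma\triangle\tau|=2t$.
\begin{itemize}
\item Grouping the wins by $\zeta$ gives $\sum_\zeta N_\zeta M_\zeta$. Here $N_\zeta=|\alpha^{-1}(\zeta)|$ and $M_\zeta=|\beta^{-1}(\zeta)|$ are counts inside the sphere $S_t(\zeta)$, restricted to pairs $(\sigma,\tau)$ with $\sigma\triangle\zeta$ and $\tau\triangle\zeta$ disjoint.
\item Two $t$-subsets $A$ and $B$ of the $d$ coordinates are disjoint with probability $\binom{d-t}{t}/\binom dt$. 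So I expect the bound $\#\mathrm{wins}\le\binom{d-t}{t}\sum_\zeta\min(N_\zeta,M_\zeta)$ to be the natural tool.
\item A cleaner route is to bound the wins from Alice's side. For each $\sigma$, the number of $\eta$ that win is at most $\#\{\eta\supseteq a(\sigma)\}=\binom{d-t}{t}$.
\item Equality for every $\sigma$ means: for every $B\in\Omega_t$ disjoint from $a(\sigma)$, Bob at $\tau=\alpha(\sigma)\triangle B$ moves to $\beta(\tau)=\alpha(\sigma)$.
\end{itemize}

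Saturation forces equality for every $\sigma$. Then for every $\zeta$ in the image of $\alpha$ and every $\tau$ at distance $t$ from $\zeta$ with $\tau\triangle\zeta$ disjoint from some $a(\sigma)$, we get $\beta(\tau)=\zeta$.

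Set $q=\mathbf 1_{\alpha(\Gamma_{d-1})}$. Each $\sigma$ sees at least one marked point on its $t$-sphere, namely $\alpha(\sigma)$. It remains to show it sees exactly one. Suppose $\zeta_1\ne\zeta_2$ are both marked and both at distance $t$ from some $\tau$. The goal is to show $\beta(\tau)$ is forced to equal both of them, a contradiction. The difficulty is the disjointness side condition. This is where $d>2t$ and Proposition~\ref{prop:even-disjointness-chain} should enter: disjointness transitions connect all of $\Omega_t$, so the forced equalities $\beta(\tau)=\zeta$ propagate along chains and show that $\beta$ coincides with the nearest marked point everywhere.

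\emph{Main obstacle.} This propagation step is the one I expect to be hardest. My first attempt would be a symmetric double-counting. Applying the Alice-side bound also from Bob's side gives $\sum_\zeta N_\zeta=\sum_\zeta M_\zeta=2^d$. Combining this with tightness of the disjointness count for every $\sigma\in\alpha^{-1}(\zeta)$ should show that the sets $\{\sigma:\alpha(\sigma)=\zeta\}$ are all of $S_t(\zeta)$. Once that holds, the $t$-spheres around the marked points partition $\Gamma_{d-1}$, which is exactly \eqref{eq:scalar-t-sphere-resolution}.
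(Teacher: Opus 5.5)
\textbf{Verdict: there is a gap in necessity.} Your sufficiency argument is correct and is the paper's construction. Your Alice-side count is also sound: a win at $(\sigma,\eta)$ forces $a(\sigma)\subseteq\eta$, so each $\sigma$ contributes at most $\binom{d-t}{t}$ wins. It is a valid and more elementary source of equality conditions than the paper's Lemma~\ref{lem:disjointness-operator} plus Cauchy--Schwarz.

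The gap is the step you flag yourself. You never show that every nonempty fiber $\alpha^{-1}(\zeta)$ is the whole sphere $\mathsf S_t(\zeta)$, and the propagation you describe cannot do this with what you have derived. Put $X_\zeta=\{A\in\Omega_t:\alpha(\zeta\triangle A)=\zeta\}$ and $Y_\zeta=\{B\in\Omega_t:\beta(\zeta\triangle B)=\zeta\}$. Alice-side tightness gives only a one-way implication: $A\in X_\zeta$ and $A\cap B=\varnothing$ imply $B\in Y_\zeta$. A chain $A_0,B_0,A_1,\dots,A_m$ from Proposition~\ref{prop:even-disjointness-chain} alternates $X\to Y\to X$, so it also needs the return step from $Y_\zeta$ to $X_\zeta$, which you never obtain. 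Your remark that the Bob-side bound `gives' $\sum_\zeta N_\zeta=\sum_\zeta M_\zeta=2^d$ misidentifies its content, since that identity holds for every deterministic strategy. Also, grouping wins by $\zeta$ gives $\sum_\zeta\langle\mathbf 1_{X_\zeta},\mathsf D_t\mathbf 1_{Y_\zeta}\rangle$, not $\sum_\zeta N_\zeta M_\zeta$, because only disjoint pairs are admissible inputs.

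\textbf{How to close it.} Counting from Bob's side, a win at $(\tau,\eta)$ forces $b(\tau)\subseteq\eta$. So the number of wins is also at most $2^d\binom{d-t}{t}$, and saturation forces tightness for every $\tau$. That tightness is exactly the reverse implication: $B\in Y_\zeta$ and $A\cap B=\varnothing$ imply $A\in X_\zeta$. Alternating along the chains of Proposition~\ref{prop:even-disjointness-chain}, which is where $d>2t$ is used, gives $X_\zeta=\Omega_t$ whenever $X_\zeta\neq\varnothing$. In other words, $\alpha^{-1}(\zeta)=\mathsf S_t(\zeta)$ for every marked $\zeta$. If some $\sigma$ lay on the spheres of two marked points $\zeta_1\neq\zeta_2$, then $\alpha(\sigma)$ would equal both. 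Hence $q=\mathbf 1_{\alpha(\Gamma_{d-1})}$ satisfies \eqref{eq:scalar-t-sphere-resolution}.

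A one-sided alternative also works. Let $\mathsf N(X)$ be the set of $t$-sets disjoint from some member of $X$. Alice-side tightness gives $Y_\zeta\supseteq\mathsf N(X_\zeta)$, and $r_t$-regularity gives $|\mathsf N(X)|\ge|X|$. The trivial identity $\sum_\zeta|X_\zeta|=\sum_\zeta|Y_\zeta|=2^d$ then forces equality throughout, and equality means $\mathsf N(\mathsf N(X_\zeta))\subseteq X_\zeta$. That equality case must, however, be argued explicitly.

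Once completed, your route reaches the paper's conclusion $X_\zeta=Y_\zeta=\Omega_t$ purely combinatorially. The even chains enter directly, rather than through the eigenvector analysis of $\mathsf D_t$ in Lemma~\ref{lem:disjointness-operator}.
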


The proof is given in
Appendix~\ref{app:classical-saturation}.

We now return to the QBN realization and show how a projective
\(t\)-sphere resolution produces a quantum strategy attaining the
no-signalling benchmark. Let \(D:=\dim\mathcal K\), and fix an
orthonormal basis \(\{e_\mu\}_{\mu=1}^D\) of \(\mathcal K\). Let
\[
\mathfrak C
:=
\{\mathcal C^{(\sigma)}:
\sigma\in\Gamma_{d-1}\}
\]
be a compatible \(t\)-coin field associated with a projective
\(t\)-sphere resolution.

For an input pair \((\sigma,\tau)\), define the entangled initial state
on
\[
(\mathfrak h_{d-1}\otimes\mathcal K)
\otimes
(\mathfrak h_{d-1}\otimes\mathcal K)
\]
by
\begin{equation}
\ket{\Psi_{\sigma,\tau}^{\mathrm{in}}}
:=
\frac{1}{\sqrt D}
\sum_{\mu=1}^D
\bigl(Z_\sigma\otimes e_\mu\bigr)
\otimes
\bigl(Z_\tau\otimes e_\mu\bigr).
\label{eq:entangled-initial-state}
\end{equation}
Here the position encode the local inputs, while the
entanglement is entirely contained in the coin factors and is
independent of \(\sigma\) and \(\tau\).

Alice and Bob apply the local QBN updates
\(W_{\mathcal C^{(\sigma)}}^{[t]}\) and
\(\overline{W_{\mathcal C^{(\tau)}}^{[t]}}\),  as defined
in \eqref{eq:input-conditioned-qbn-update}. Here the bar denotes complex conjugation with respect to the fixed
coin basis \(\{e_\mu\}_{\mu=1}^D\) and the Bernoulli basis on the
position space. The resulting joint state is
\begin{equation}
\ket{\Psi_{\sigma,\tau}^{\mathrm{out}}}
:=
\left(
W_{\mathcal C^{(\sigma)}}^{[t]}
\otimes
\overline{W_{\mathcal C^{(\tau)}}^{[t]}}
\right)
\ket{\Psi_{\sigma,\tau}^{\mathrm{in}}}.
\label{eq:qbn-updated-joint-state}
\end{equation}
Thus \eqref{eq:qbn-updated-joint-state} describes a one-step
two-agent QBN quantum walk on the hypercube with Hamming step length
\(t\), with the two walkers correlated through their entangled coin
degrees of freedom.

For fixed \(\sigma\), the map
\[
A\longmapsto\sigma\triangle A,
\qquad
A\in\Omega_t,
\]
is injective, and hence measuring the final position
\(\sigma\triangle A\) identifies the output \(A\) uniquely. The same
holds for Bob. For \(\zeta\in\Gamma_{d-1}\), let
\(\Pi_\zeta:=|Z_\zeta\rangle\langle Z_\zeta|\). Therefore the conditional output probability of the
quantum strategy is exactly the Born probability induced by the
updated QBN state:
\begin{equation}
\begin{aligned}
\mathbb P_{\mathfrak C}(A,B\mid\sigma,\tau)
={}&
\Bigl\|
\bigl[
(\Pi_{\sigma\triangle A}\otimes I_{\mathcal K})
\otimes
(\Pi_{\tau\triangle B}\otimes I_{\mathcal K})
\bigr]
\ket{\Psi_{\sigma,\tau}^{\mathrm{out}}}
\Bigr\|^2.
\end{aligned}
\label{eq:qbn-born-correlation}
\end{equation}

\begin{proposition}
\label{prop:qbn-update-correlation}
Let
\(\{Q_\zeta\}_{\zeta\in\Gamma_{d-1}}\)
be a projective \(t\)-sphere resolution and let \(\mathfrak C\) be a
compatible \(t\)-coin field. Then
\begin{equation}
\mathbb P_{\mathfrak C}(A,B\mid\sigma,\tau)
=
\frac{1}{D}
\operatorname{Tr}\!\left(
Q_{\sigma\triangle A}
Q_{\tau\triangle B}
\right).
\label{eq:qbn-update-correlation}
\end{equation}
In particular, if
\(\sigma\triangle A=\tau\triangle B=\zeta\), then
\begin{equation}
\mathbb P_{\mathfrak C}(A,B\mid\sigma,\tau)
=
\frac{\operatorname{rank}Q_\zeta}{D}.
\label{eq:meeting-branch-probability}
\end{equation}
\end{proposition}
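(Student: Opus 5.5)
We compute the output state directly from Proposition~\ref{prop:t-coordinate-qbn-operator} and then take the Born probability. By \eqref{eq:t-coordinate-qbn-operator-action},
\[
W_{\mathcal C^{(\sigma)}}^{[t]}(Z_\sigma\otimes e_\mu)
=\sum_{A\in\Omega_t}Z_{\sigma\triangle A}\otimes C_A^{(\sigma)}e_\mu .
\]
Since the Bernoulli basis is real, complex conjugation in the fixed bases gives $\overline{W_{\mathcal C^{(\tau)}}^{[t]}}=\sum_B\Xi_B\otimes\overline{C_B^{(\tau)}}$. Hence
\[
\bigl|\Psi^{\mathrm{out}}_{\sigma,\tau}\bigr\rangle
=\frac1{\sqrt D}\sum_{A,B}\sum_\mu
\bigl(Z_{\sigma\triangle A}\otimes C_A^{(\sigma)}e_\mu\bigr)\otimes\bigl(Z_{\tau\triangle B}\otimes\overline{C_B^{(\tau)}}e_\mu\bigr).
\]
For fixed $\sigma$ the vertices $\sigma\triangle A$ are distinct, and likewise for $\tau$. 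So the projector in \eqref{eq:qbn-born-correlation} extracts exactly the single $(A,B)$ term, and the probability is $\frac1D\bigl\|\sum_\mu C_A^{(\sigma)}e_\mu\otimes\overline{C_B^{(\tau)}}e_\mu\bigr\|^2$.

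The key step is the standard maximally-entangled identity
\[
\sum_\mu Xe_\mu\otimes\overline{Y}e_\mu=(XY^{T}\otimes I)\sum_\mu e_\mu\otimes e_\mu ,
\]
whose squared norm is $\operatorname{Tr}\bigl(XY^T\,\overline{Y}X^*\bigr)=\operatorname{Tr}(X^*X\,Y^TY^{T*})$. With $X=C_A^{(\sigma)}$ and $Y=C_B^{(\tau)}$, this equals $\operatorname{Tr}(X^*X\,\overline{Y^*Y})$. I will verify this by expanding the squared norm entrywise as $\sum_{\mu,\nu}\langle Xe_\mu,Xe_\nu\rangle\,\overline{\langle \overline Y e_\mu,\overline Y e_\nu\rangle}$. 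This gives $\sum_{\mu\nu}(X^*X)_{\mu\nu}(Y^*Y)_{\mu\nu}$, which is $\operatorname{Tr}\bigl(X^*X\,(Y^*Y)^T\bigr)$.

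Compatibility \eqref{eq:meeting-point-compatibility} then gives $X^*X=Q_{\sigma\triangle A}$ and $Y^*Y=Q_{\tau\triangle B}$. The main obstacle is the transpose: the calculation yields $\frac1D\operatorname{Tr}\bigl(Q_{\sigma\triangle A}\,Q_{\tau\triangle B}^{T}\bigr)$ rather than the stated formula. Two options resolve this. The first is to use that Bob's conjugated operator pairs correctly, so that $(Y^*Y)^T=\overline{Q}$, and $\operatorname{Tr}(P\overline{Q})=\operatorname{Tr}(PQ)$ whenever the $Q_\zeta$ are real in the basis $\{e_\mu\}$. The second is to reinterpret Bob's conjugation as acting with $\overline{C}$ placed so the identity yields $XY^*$.

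I will first check which bookkeeping gives $\operatorname{Tr}(X^*X\,Y^*Y)$ directly. If neither works in general, I will note that the formula holds as stated when the resolution is real, which is the case for the Hadamard constructions. Equivalently, one may replace Bob's coin by the transposed convention.

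For the meeting branch $\zeta=\sigma\triangle A=\tau\triangle B$, the formula gives $\frac1D\operatorname{Tr}(Q_\zeta^2)$. Since $Q_\zeta$ is a projection, this is $\frac1D\operatorname{Tr}(Q_\zeta)=\operatorname{rank}Q_\zeta/D$, which is \eqref{eq:meeting-branch-probability}.
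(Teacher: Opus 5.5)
Your outline (branch extraction from distinct final positions, substitution of the compatibility condition, and $Q_\zeta^2=Q_\zeta$ on the meeting branch) matches the paper. However, the central computation is wrong, and the ``transpose obstacle'' you identify is an artifact of that error. Set $\Phi:=\sum_\mu e_\mu\otimes e_\mu$. The identity $(I\otimes M)\Phi=(M^{T}\otimes I)\Phi$ must be applied with $M=\overline{Y}$, because $\overline{Y}$ is what sits on Bob's factor. This gives
\[
\sum_\mu Xe_\mu\otimes\overline{Y}e_\mu=(X\overline{Y}^{T}\otimes I)\Phi=(XY^{*}\otimes I)\Phi,
\]
not $(XY^{T}\otimes I)\Phi$. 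The squared norm is therefore $\operatorname{Tr}(YX^{*}XY^{*})=\operatorname{Tr}(X^{*}X\,Y^{*}Y)$.

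Your entrywise check has the matching mistake. The inner product on a tensor product factorizes as $\langle a\otimes b,c\otimes d\rangle=\langle a,c\rangle\langle b,d\rangle$, with no extra complex conjugate on the second factor. Moreover, $\langle\overline{Y}e_\mu,\overline{Y}e_\nu\rangle=\overline{(Y^{*}Y)_{\mu\nu}}=(Y^{*}Y)_{\nu\mu}$ by Hermiticity, so
\[
\sum_{\mu,\nu}\langle Xe_\mu,Xe_\nu\rangle\langle\overline{Y}e_\mu,\overline{Y}e_\nu\rangle=\sum_{\mu,\nu}(X^{*}X)_{\mu\nu}(Y^{*}Y)_{\nu\mu}=\operatorname{Tr}(X^{*}X\,Y^{*}Y).
\]
The complex conjugation in Bob's update $\overline{W_{\mathcal C^{(\tau)}}^{[t]}}$ is exactly what cancels the transpose. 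This is the content of the paper's one-line ``expanding the norm'' step. Compatibility then gives $\frac1D\operatorname{Tr}(Q_{\sigma\triangle A}Q_{\tau\triangle B})$ for arbitrary, possibly complex, resolutions. Your ``second option'' is therefore not a reinterpretation at all: it is literally what the protocol \eqref{eq:qbn-updated-joint-state} already does.

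As written, your proposal does not prove the statement. It arrives at $\frac1D\operatorname{Tr}(Q_{\sigma\triangle A}Q_{\tau\triangle B}^{T})$, which is not the claimed formula, and then falls back on real resolutions or a modified convention for Bob.
\begin{itemize}
\item The first fallback proves a strictly weaker result. The proposition, and Proposition~\ref{prop:resolution-sufficiency} built on it, concern arbitrary projective resolutions.
\item The second fallback changes the strategy being analyzed.
\end{itemize}
Your last step also quietly uses the stated formula rather than the one you derived. Under your transposed formula, the meeting branch would have probability $\frac1D\operatorname{Tr}(Q_\zeta Q_\zeta^{T})$, which need not equal $\operatorname{rank}Q_\zeta/D$. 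For example, for $Q_\zeta=|v\rangle\langle v|$ with $v=(e_1+ie_2)/\sqrt2$ it equals $0$. Correcting the identity removes all of these issues and gives the result with no reality assumption.
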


\begin{proof}
Since
\(\Xi_A Z_\sigma=Z_{\sigma\triangle A}\) and
\(\Xi_B Z_\tau=Z_{\tau\triangle B}\), the position projections in
\eqref{eq:qbn-born-correlation} select exactly the branch
\((A,B)\). Therefore
\[
\mathbb P_{\mathfrak C}(A,B\mid\sigma,\tau)
=
\frac{1}{D}
\left\|
\sum_{\mu=1}^D
C_A^{(\sigma)}e_\mu
\otimes
\overline{C_B^{(\tau)}e_\mu}
\right\|^2.
\]
Expanding the norm in the orthonormal basis
\(\{e_\mu\}_{\mu=1}^D\) gives
\[
\mathbb P_{\mathfrak C}(A,B\mid\sigma,\tau)
=
\frac{1}{D}
\operatorname{Tr}\!\left(
C_A^{(\sigma)*}C_A^{(\sigma)}
C_B^{(\tau)*}C_B^{(\tau)}
\right).
\]
By the compatibility condition,
\[
C_A^{(\sigma)*}C_A^{(\sigma)}
=
Q_{\sigma\triangle A},
\qquad
C_B^{(\tau)*}C_B^{(\tau)}
=
Q_{\tau\triangle B},
\]
and hence
\[
\mathbb P_{\mathfrak C}(A,B\mid\sigma,\tau)
=
\frac{1}{D}
\operatorname{Tr}\!\left(
Q_{\sigma\triangle A}
Q_{\tau\triangle B}
\right),
\]
which proves \eqref{eq:qbn-update-correlation}.

If
\(\sigma\triangle A=\tau\triangle B=\zeta\), then
\(Q_\zeta^2=Q_\zeta\), so
\[
\mathbb P_{\mathfrak C}(A,B\mid\sigma,\tau)
=
\frac{1}{D}\operatorname{Tr}(Q_\zeta)
=
\frac{\operatorname{rank}Q_\zeta}{D}.
\]
This proves \eqref{eq:meeting-branch-probability}.
\end{proof}

\begin{proposition}
\label{prop:resolution-sufficiency}
If a finite-dimensional projective \(t\)-sphere resolution exists,
then the associated update strategy satisfies
\begin{equation}
\omega_{d,t}(\mathbb P_{\mathfrak C})
=
\omega_{\mathrm{NS}}(d,t)
=
\frac{\binom{2t}{t}}{\binom dt}.
\label{eq:qbn-attains-ns}
\end{equation}
\end{proposition}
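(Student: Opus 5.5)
We compute the success probability of $\mathbb P_{\mathfrak C}$ directly from Proposition~\ref{prop:qbn-update-correlation}. By Theorem~\ref{thm:coin-resolution-equivalence} the resolution $\{Q_\zeta\}$ gives a compatible coin field $\mathfrak C$, and the resulting conditional distribution is a legitimate finite-dimensional quantum correlation: local unitaries act on a shared maximally entangled coin state, followed by local position measurements. Hence $\mathbb P_{\mathfrak C}\in\mathcal Q_{d,t}$, and it suffices to show that $\omega_{d,t}(\mathbb P_{\mathfrak C})$ equals the value in Lemma~\ref{lem:NS-general}.

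Fix $\sigma$ and $\eta\in\Omega_{2t}$, and set $\tau=\sigma\triangle\eta$. The winning pairs $(A,B)$ are exactly those with $A\cap B=\varnothing$ and $A\cup B=\eta$. For such a pair the two final vertices coincide: $\tau\triangle B=\sigma\triangle\eta\triangle B=\sigma\triangle A=:\zeta$. Conversely, each $\zeta=\sigma\triangle A$ with $A\subseteq\eta$ and $|A|=t$ determines exactly one winning pair. Equation \eqref{eq:meeting-branch-probability} then gives
\[
\sum_{A,B}\chi_{d,t}(\eta;A,B)\,\mathbb P_{\mathfrak C}(A,B\mid\sigma,\tau)
=\frac1D\sum_{A\subseteq\eta,\ |A|=t}\operatorname{rank}Q_{\sigma\triangle A}.
\]

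We then sum over $\eta\in\Omega_{2t}$ and swap the order of summation. Each $A\in\Omega_t$ lies in exactly $\binom{d-t}{t}$ sets $\eta\in\Omega_{2t}$, namely $\eta=A\cup B$ with $B$ a $t$-subset of the complement of $A$. Therefore
\[
\sum_{\eta}\sum_{A\subseteq\eta}\operatorname{rank}Q_{\sigma\triangle A}
=\binom{d-t}{t}\sum_{A\in\Omega_t}\operatorname{rank}Q_{\sigma\triangle A}
=\binom{d-t}{t}\,D,
\]
where the last equality takes traces in \eqref{eq:projective-t-sphere}. The per-$\sigma$ contribution is therefore $\binom{d-t}{t}$, independent of $\sigma$. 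Summing over the $2^d$ vertices and dividing by $|\mathcal I_{d,t}|=2^d\binom d{2t}$ yields $\binom{d-t}{t}/\binom d{2t}$. By Lemma~\ref{lem:NS-general} this equals $\omega_{\mathrm{NS}}(d,t)=\binom{2t}{t}/\binom dt$, the last form following from the identity $\binom d{2t}\binom{2t}t=\binom dt\binom{d-t}t$.

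None of these steps is deep. The only point requiring care is the bijection between winning pairs and meeting vertices: one must check that each winning branch contributes exactly $\operatorname{rank}Q_\zeta/D$ and that no branch is double-counted. Everything else is the trace of the sphere resolution together with a double-counting identity.
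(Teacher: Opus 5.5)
Your proof is correct and is essentially the paper's argument. Both combine the meeting-branch probability $\operatorname{rank}Q_\zeta/D$ from Proposition~\ref{prop:qbn-update-correlation} with the trace of \eqref{eq:projective-t-sphere} and a count of winning branches; the only difference is the order of summation. The paper groups winning branches by the meeting vertex $\zeta$, using $\sum_\zeta\operatorname{rank}Q_\zeta=2^dD/\binom dt$, whereas you group them by Alice's input $\sigma$ and apply the trace identity at each $\sigma$, which also shows that the success probability equals $\omega_{\mathrm{NS}}(d,t)$ conditionally on every $\sigma$.
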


\begin{proof}
Write \(r_\zeta:=\operatorname{rank}Q_\zeta\). Taking traces in
\eqref{eq:projective-t-sphere} and summing over all
\(\sigma\in\Gamma_{d-1}\) gives
\begin{equation}
\sum_{\zeta\in\Gamma_{d-1}}r_\zeta
=
\frac{2^dD}{\binom dt}.
\label{eq:total-projective-rank}
\end{equation}

For a fixed meeting vertex \(\zeta\), successful branches are
determined by \(A,B\in\Omega_t\) with \(A\cap B=\varnothing\), through
\(\sigma=\zeta\triangle A\) and
\(\tau=\zeta\triangle B\). There are
\(\binom dt\binom{d-t}{t}\) such ordered pairs. Hence, by
\eqref{eq:meeting-branch-probability} and
\eqref{eq:total-projective-rank},
\begin{align*}
\omega_{d,t}(\mathbb P_{\mathfrak C})
&=
\frac{
\binom dt\binom{d-t}{t}
}{
2^d\binom d{2t}D
}
\sum_{\zeta\in\Gamma_{d-1}}r_\zeta
\\
&=
\frac{\binom{d-t}{t}}{\binom d{2t}}
=
\frac{\binom{2t}{t}}{\binom dt}.
\end{align*}
Lemma~\ref{lem:NS-general} identifies this value with
\(\omega_{\mathrm{NS}}(d,t)\).
\end{proof}

We now prove the converse.

\begin{theorem}
\label{thm:quantum-saturation-rigidity}
Assume \(d>2t\). If a finite-dimensional quantum strategy attains
\(\omega_{\mathrm{NS}}(d,t)\), then a finite-dimensional projective
\(t\)-sphere resolution exists.
\end{theorem}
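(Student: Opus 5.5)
Throughout, write a general finite-dimensional quantum strategy as a state $\rho$ on $\mathcal H_A\otimes\mathcal H_B$ together with POVMs $\{E^\sigma_A\}_{A\in\Omega_t}$ on $\mathcal H_A$ and $\{F^\tau_B\}_{B\in\Omega_t}$ on $\mathcal H_B$. We first purify the state and then apply Naimark dilation, so that without loss of generality the state is a pure vector $\psi$ and all measurement operators are projections. We next restrict to the supports of the reduced states, so that $\psi$ has full Schmidt rank. Then $\psi$ is $\rho_A^{1/2}$ times a maximally entangled vector, and both reduced density matrices are invertible.

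We then reindex both players' measurements by the meeting vertex. Set
\[
Q^\sigma_\zeta:=E^\sigma_{\sigma\triangle\zeta},\qquad R^\tau_\zeta:=F^\tau_{\tau\triangle\zeta},
\]
for $|\sigma\triangle\zeta|=t$ (respectively $|\tau\triangle\zeta|=t$), and set these operators to $0$ otherwise. A branch succeeds exactly when both players name the same $\zeta$, so
\[
\omega=\frac1{|\mathcal I_{d,t}|}\sum_{\sigma,\eta}\sum_\zeta\langle\psi,Q^\sigma_\zeta\otimes R^{\sigma\triangle\eta}_\zeta\psi\rangle .
\]
The no-signalling value of Lemma~\ref{lem:NS-general} comes from a counting/LP bound. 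Our next aim is to reproduce that bound for quantum strategies and record its equality conditions. The bound should follow from the fact that each player's marginal on $\zeta$ is a probability distribution, counted against the number $\binom{d-t}{t}$ of partners $\tau$ sharing a given $\zeta$ with $\sigma$. Equality should force the following statement: for every pair $(\sigma,\tau)$ at distance $2t$ and every common $\zeta$, whenever Alice obtains $\zeta$, Bob obtains $\zeta$ with certainty, and conversely. In operator form,
\[
(Q^\sigma_\zeta\otimes I)\psi=(I\otimes R^\tau_\zeta)\psi
\]
for all $\zeta$ at distance $t$ from both $\sigma$ and $\tau$. The reverse case, where $\zeta$ lies at distance $t$ from only one of the two players, should be ruled out by saturation. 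I expect extracting this equality characterization from the proof of Lemma~\ref{lem:NS-general} to need care, since the NS proof in the appendix may only bound the average.

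Granting the operator identity, the full Schmidt rank of $\psi$ transports it to a single space. For a projection $P$ on Bob's side, the relation $(X\otimes I)\psi=(I\otimes P)\psi$ determines $X$ uniquely as $\rho_A^{1/2}P^{T}\rho_A^{-1/2}$, computed in the Schmidt basis. Hence, whenever $Q^\sigma_\zeta$ and $Q^{\sigma'}_\zeta$ are both linked to the same $R^\tau_\zeta$, they coincide. This is where Proposition~\ref{prop:even-disjointness-chain} enters. Take two inputs $\sigma,\sigma'$ with $\zeta$ on both of their $t$-spheres, and write $\sigma=\zeta\triangle A$, $\sigma'=\zeta\triangle A'$. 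Choose a chain $A=A_0,B_0,A_1,\dots,A_m=A'$ with consecutive members disjoint, which exists because $d>2t$. Each input $\tau_i=\zeta\triangle B_i$ is at distance $2t$ from both $\zeta\triangle A_i$ and $\zeta\triangle A_{i+1}$ and shares the meeting vertex $\zeta$ with each of them. Alternating equalities along the chain then give
\[
Q^{\zeta\triangle A_0}_\zeta=R^{\tau_0}_\zeta\text{-transport}=Q^{\zeta\triangle A_1}_\zeta=\cdots=Q^{\sigma'}_\zeta .
\]
Thus $Q_\zeta:=Q^\sigma_\zeta$ is independent of $\sigma$.

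It remains to verify that these operators form a projective $t$-sphere resolution. Each $Q_\zeta$ is a projection on the finite-dimensional space $\mathcal H_A$, and for each $\sigma$ we have
\[
\sum_{A}Q_{\sigma\triangle A}=\sum_A E^\sigma_A=I,
\]
which is \eqref{eq:projective-t-sphere}. Every $\zeta$ lies on some $t$-sphere, so every $Q_\zeta$ is defined. The main obstacle, as noted above, is the equality analysis: showing that attaining $\omega_{\mathrm{NS}}$ forces the perfect-correlation identity on every relevant branch. I would first try to write $\omega_{\mathrm{NS}}-\omega$ as a nonnegative sum of terms of the form $\langle\psi,Q^\sigma_\zeta\otimes(I-R^\tau_\zeta)\psi\rangle$, using the normalisations $\sum_\zeta Q^\sigma_\zeta=I$ and $\sum_\zeta R^\tau_\zeta=I$ together with a double count over the pairs $(\sigma,\tau)$ sharing $\zeta$.
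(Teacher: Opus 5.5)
Your route is the paper's: dilate to a pure state with projective measurements, extract the synchronization identity $(E^\sigma_A\otimes I)\psi=(I\otimes F^\tau_B)\psi$ on every winning branch, use full Schmidt rank for injectivity, transport along the chains of Proposition~\ref{prop:even-disjointness-chain}, and finish by completeness.

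The equality analysis you flag is not actually an obstacle. The bound in Appendix~\ref{app:no-signalling} is a sum of pointwise inequalities $\mathbb P(A,B\mid\sigma,\tau)\le\mathbb P_{\mathrm A}(A\mid\sigma)$, one for each winning configuration. Saturation therefore forces each of them to be an equality. The same count with Bob's marginal gives $\mathbb P=\mathbb P_{\mathrm A}=\mathbb P_{\mathrm B}$ on every winning branch, which is \eqref{eq:NS-saturation-equality}. For projections, $\|(E^\sigma_A\otimes I)\psi-(I\otimes F^\tau_B)\psi\|^2=\mathbb P_{\mathrm A}+\mathbb P_{\mathrm B}-2\mathbb P=0$, and this is precisely your proposed nonnegative decomposition.

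The step that needs repair is the order of your reductions. You make the measurements projective and then restrict to the Schmidt supports, as if both normalizations were available without loss of generality. In general they are not: compressing a projection to the support of $\rho_{\mathrm A}$ gives only a positive contraction. As written, your closing claim that each $Q_\zeta$ is a projection is therefore unjustified. There are two ways to fix this.

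\begin{enumerate}[(i)]
\item Follow the paper and derive the synchronization identity on the dilated space first. Its right-hand side lies in $\mathcal S_{\mathrm A}\otimes\mathcal H_{\mathrm B}$, which forces $E^\sigma_A\mathcal S_{\mathrm A}\subseteq\mathcal S_{\mathrm A}$. Every $(\sigma,A)$ occurs in some winning branch because $d>2t$. Hence each $E^\sigma_A$ commutes with the support projection, and its restriction is still a projection; the same holds for Bob.
\item Keep your order and work with the compressed effects $E,F$, using $E^2\le E$. Then $\|(E\otimes I)\psi-(I\otimes F)\psi\|^2\le\mathbb P_{\mathrm A}+\mathbb P_{\mathrm B}-2\mathbb P=0$. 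Equality gives $\operatorname{Tr}\bigl(\rho_{\mathrm A}(E-E^2)\bigr)=0$, so $E=E^2$ because $\rho_{\mathrm A}$ is invertible.
\end{enumerate}

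Finally, drop the remark that saturation rules out a $\zeta$ lying on only one player's sphere. Since $\omega_{\mathrm{NS}}(d,t)<1$ for $d>2t$, such losing outcomes do occur with positive probability. Your argument never uses this claim in any case.
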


\begin{proof}
By purification and local Naimark dilation, we may represent the
strategy by a pure bipartite state \(\ket\psi\) and projective
measurements
\[
\{E_{\sigma,A}:A\in\Omega_t\},
\qquad
\{F_{\tau,B}:B\in\Omega_t\}.
\]

Since the strategy attains the no-signalling bound, the equality
condition \eqref{eq:NS-saturation-equality} gives
\begin{equation}
\mathbb P(A,B\mid\sigma,\tau)
=
\mathbb P_{\mathrm A}(A\mid\sigma)
=
\mathbb P_{\mathrm B}(B\mid\tau)
\label{eq:winning-marginal-saturation}
\end{equation}
whenever \(A\cap B=\varnothing\) and
\(\tau=\sigma\triangle A\triangle B\). Consequently,
\[
\left\|
(E_{\sigma,A}\otimes I)\ket\psi
-
(I\otimes F_{\tau,B})\ket\psi
\right\|^2
=
0,
\]
and hence
\begin{equation}
(E_{\sigma,A}\otimes I)\ket\psi
=
(I\otimes F_{\tau,B})\ket\psi.
\label{eq:projector-synchronization}
\end{equation}

Let \(\mathcal S_{\mathrm A}\) and
\(\mathcal S_{\mathrm B}\) denote the Schmidt supports of
\(\ket\psi\). Since \(d>2t\), every \(A\in\Omega_t\) admits a disjoint
\(B\in\Omega_t\), so \eqref{eq:projector-synchronization} applies to
every measurement outcome. In Schmidt bases, let
\[
S:\mathcal S_{\mathrm B}\longrightarrow\mathcal S_{\mathrm A}
\]
be the invertible Schmidt map associated with \(\ket\psi\).
Equation~\eqref{eq:projector-synchronization} implies
\(E_{\sigma,A}S=SF_{\tau,B}^{T}\). Thus the measurement operators
preserve the Schmidt supports, and we may restrict to these supports
without changing the correlation. The state then has full Schmidt
rank, so
\begin{equation}
X\longmapsto(X\otimes I)\ket\psi
\quad\text{is injective}.
\label{eq:full-schmidt-injectivity}
\end{equation}

Fix \(\zeta\in\Gamma_{d-1}\). For disjoint
\(A,B\in\Omega_t\), set
\(\sigma=\zeta\triangle A\) and
\(\tau=\zeta\triangle B\). Then
\eqref{eq:projector-synchronization} becomes
\begin{equation}
(E_{\zeta\triangle A,A}\otimes I)\ket\psi
=
(I\otimes F_{\zeta\triangle B,B})\ket\psi.
\label{eq:meeting-point-synchronization}
\end{equation}

Let \(A,A'\in\Omega_t\). By
Proposition~\ref{prop:even-disjointness-chain}, there exist
\[
A=A_0,B_0,A_1,\ldots,B_{m-1},A_m=A'
\]
such that
\(A_i\cap B_i=\varnothing\) and
\(B_i\cap A_{i+1}=\varnothing\). Applying
\eqref{eq:meeting-point-synchronization} to the two disjoint pairs
\((A_i,B_i)\) and \((A_{i+1},B_i)\) gives
\[
(E_{\zeta\triangle A_i,A_i}\otimes I)\ket\psi
=
(E_{\zeta\triangle A_{i+1},A_{i+1}}\otimes I)\ket\psi.
\]
Iterating along the chain yields
\[
(E_{\zeta\triangle A,A}\otimes I)\ket\psi
=
(E_{\zeta\triangle A',A'}\otimes I)\ket\psi.
\]
By \eqref{eq:full-schmidt-injectivity},
\[
E_{\zeta\triangle A,A}
=
E_{\zeta\triangle A',A'}.
\]
Thus the projector depends only on the common final vertex \(\zeta\).
Define
\[
Q_\zeta
:=
E_{\zeta\triangle A,A},
\qquad
A\in\Omega_t.
\]

For fixed \(\sigma\in\Gamma_{d-1}\), taking
\(\zeta=\sigma\triangle A\) gives
\(E_{\sigma,A}=Q_{\sigma\triangle A}\). Measurement completeness
therefore implies
\[
\sum_{A\in\Omega_t}
Q_{\sigma\triangle A}
=
I,
\qquad
\sigma\in\Gamma_{d-1}.
\]
Hence
\(\{Q_\zeta:\zeta\in\Gamma_{d-1}\}\)
is a projective \(t\)-sphere resolution.
\end{proof}

\begin{remark}
Theorem~\ref{thm:quantum-saturation-rigidity} applies to arbitrary
finite-dimensional quantum strategies for the present rendezvous
problem; no QBN realization is assumed in the converse direction.
Thus every finite-dimensional strategy attaining the no-signalling
value must induce a projective \(t\)-sphere resolution. The result
concerns exact finite-dimensional attainment and does not address
infinite-dimensional strategies or non-attaining sequences approaching
\(\omega_{\mathrm{NS}}(d,t)\).
\end{remark}

The preceding results yield the following characterization.

\begin{corollary}
\label{cor:main-characterization}
Assume \(d>2t\). The following statements are equivalent:
\begin{enumerate}[(i)]
\item a finite-dimensional quantum strategy attains
\(\omega_{\mathrm{NS}}(d,t)\);

\item a finite-dimensional projective \(t\)-sphere resolution exists;

\item a finite-dimensional rendezvous-compatible \(t\)-coin field
exists.
\end{enumerate}
\end{corollary}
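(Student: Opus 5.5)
The equivalences follow by assembling three results already in the paper, arranged as the cycle (ii)$\Rightarrow$(i)$\Rightarrow$(ii) together with (ii)$\Leftrightarrow$(iii). Throughout, $d>2t$ is assumed, as required by Theorem~\ref{thm:quantum-saturation-rigidity}, and every object lives on a fixed finite-dimensional space $\mathcal K$.

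For (ii)$\Rightarrow$(i), take a projective $t$-sphere resolution $\{Q_\zeta\}$ on $\mathcal K$ with $D=\dim\mathcal K<\infty$. Theorem~\ref{thm:coin-resolution-equivalence}, applied with any unitaries (for instance $U_\sigma=I$), produces a compatible $t$-coin field $\mathfrak C$. The entangled state \eqref{eq:entangled-initial-state}, the local updates \eqref{eq:input-conditioned-qbn-update} and the position measurements \eqref{eq:qbn-born-correlation} then form a finite-dimensional quantum strategy on $(\mathfrak h_{d-1}\otimes\mathcal K)^{\otimes2}$. One point needs a check: this must be a legitimate local strategy. The state depends on $(\sigma,\tau)$ only through the position registers, so rewrite it as the input-independent coin state $\frac1{\sqrt D}\sum_\mu e_\mu\otimes e_\mu$, with each party locally preparing $Z_\sigma$ (respectively $Z_\tau$). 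Alice's operation then depends only on $\sigma$ and Bob's only on $\tau$. By Proposition~\ref{prop:resolution-sufficiency}, this strategy attains $\omega_{\mathrm{NS}}(d,t)$.

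For (i)$\Rightarrow$(ii), this is exactly Theorem~\ref{thm:quantum-saturation-rigidity}, which uses the hypothesis $d>2t$.

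For (ii)$\Leftrightarrow$(iii), this is Theorem~\ref{thm:coin-resolution-equivalence}. The forward direction is the construction $C_A^{(\sigma)}=U_\sigma Q_{\sigma\triangle A}$. The converse comes from summing the compatibility relation \eqref{eq:meeting-point-compatibility} over $A\in\Omega_t$ and using $\sum_A C_A^{(\sigma)*}C_A^{(\sigma)}=I$. Both directions preserve the underlying space $\mathcal K$, so finite-dimensionality is carried over.

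No step is a genuine obstacle, since the corollary is a bookkeeping consequence of these results. The only point deserving care is the one in (ii)$\Rightarrow$(i): making sure the QBN construction is a bona fide finite-dimensional strategy in $\mathcal Q_{d,t}$, with a fixed shared state and local input-dependent operations. The reformulation above settles this.
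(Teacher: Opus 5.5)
Your proposal is correct and follows the paper's route: the corollary is assembled from Proposition~\ref{prop:resolution-sufficiency} for (ii)$\Rightarrow$(i), Theorem~\ref{thm:quantum-saturation-rigidity} for (i)$\Rightarrow$(ii), and Theorem~\ref{thm:coin-resolution-equivalence} for (ii)$\Leftrightarrow$(iii). Your extra check is a useful clarification that the paper leaves implicit: the input-dependent state factors as locally prepared positions $Z_\sigma, Z_\tau$ times a fixed maximally entangled coin state, so the QBN construction is a genuine finite-dimensional strategy in $\mathcal Q_{d,t}$.
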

\section{Quantum optimality and quantum advantage}
\label{sec:quantum-optimality-advantage}

We now apply the saturation characterization to determine
when quantum strategies attain the no-signalling value and when they
yield a strict advantage over classical strategies. We first consider
\(t=1\), which is the nearest-neighbor rendezvous problem.

In the QBN realization, the update
\(
\sigma\longmapsto\sigma\triangle\{j\}
\)
is implemented by \(\Xi_j\), and the projective sphere condition
reduces to
\begin{equation}
\sum_{j=0}^{d-1}
Q_{\sigma\triangle\{j\}}
=
I,
\qquad
\sigma\in\Gamma_{d-1}.
\label{eq:one-step-projective-resolution}
\end{equation}
By Corollary~\ref{cor:main-characterization}, the existence of such a
projective resolution is equivalent to finite-dimensional quantum
attainment of the no-signalling value
\[
\omega_{\mathrm{NS}}(d)=\frac{2}{d}.
\]

For $t=1$, Proposition~\ref{prop:classical-general} identifies classical
saturation with the existence of a scalar $1$-sphere resolution.
Writing
\[
S:=\{\zeta\in\Gamma_{d-1}:q(\zeta)=1\},
\]
the scalar resolution condition is equivalent to requiring that every
vertex of the hypercube have exactly one neighbor in $S$. Thus scalar
$1$-sphere resolutions are precisely perfect total dominating sets of
the hypercube. By the known characterization of such sets
\cite{KwonLee2014}, they exist if and only if $d$ is a power of two.
We therefore obtain the following corollary.

\begin{corollary}
\label{cor:classical-one-step}
For $d\ge2$,
\begin{equation}
\omega_{\mathrm C}(d)
=
\frac{2}{d}
\quad\Longleftrightarrow\quad
d=2^m
\quad\text{for some }m\ge1.
\label{eq:classical-one-step-saturation}
\end{equation}
\end{corollary}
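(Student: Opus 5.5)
The plan is to reduce the statement to Proposition~\ref{prop:classical-general} with \(t=1\), and then to invoke the characterization of perfect total dominating sets of the hypercube. For \(t=1\) the condition \(d>2t\) reads \(d\ge3\), so the case \(d=2\) has to be treated separately. Lemma~\ref{lem:NS-general} gives \(\omega_{\mathrm{NS}}(d,1)=\binom{2}{1}/\binom d1=2/d\), which matches the right-hand side of \eqref{eq:classical-one-step-saturation}.

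\textbf{The case \(d\ge3\).} Proposition~\ref{prop:classical-general} says that \(\omega_{\mathrm C}(d)=2/d\) holds if and only if there is a \(q:\Gamma_{d-1}\to\{0,1\}\) with
\[
\sum_{j=0}^{d-1}q(\sigma\triangle\{j\})=1
\qquad\text{for all }\sigma\in\Gamma_{d-1}.
\]
Put \(S=q^{-1}(1)\). The displayed condition states that every vertex \(\sigma\) has exactly one hypercube neighbour in \(S\); the vertex \(\sigma\) itself is not counted, since the sum runs only over the open neighbourhood. This is exactly the definition of a perfect total dominating set, also called a total perfect code, in \(Q_d\). Conversely, any such set gives a valid \(q\) through its indicator function. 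By \cite{KwonLee2014}, such an \(S\) exists if and only if \(d\) is a power of two. Powers of two with \(d\ge3\) are exactly \(d=2^m\) with \(m\ge2\).

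\textbf{The case \(d=2\).} Here \(d=2=2^1\), so I must show directly that \(\omega_{\mathrm C}(2)=1\). The inputs are antipodal pairs \((\sigma,\sigma\triangle\{0,1\})\). Let Alice always flip coordinate \(0\) and Bob always flip coordinate \(1\). Then
\[
\sigma\triangle\{0\}=\tau\triangle\{1\},
\]
so the players always meet. This deterministic strategy has success probability \(1=2/2\), so both sides of the equivalence hold at \(d=2\).

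\textbf{Main obstacle.} The only delicate point is to check that the scalar \(1\)-sphere condition matches the definition used in \cite{KwonLee2014}. It must be total domination (open neighbourhoods) rather than ordinary perfect domination (closed neighbourhoods), since the latter would give the Hamming-code criterion \(d=2^m-1\). Because \(q(\sigma)\) itself does not appear in the sum, the open-neighbourhood reading is the correct one. For reassurance one can also give the standard construction for \(d=2^m\): take \(S\) to be the union of a perfect Hamming code \(H\) in \(Q_{d-1}\) and its translate by \(e_{d-1}\).
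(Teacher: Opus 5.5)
Your proposal is correct and follows the paper's route: specialize Proposition~\ref{prop:classical-general} to $t=1$, identify scalar $1$-sphere resolutions with total perfect codes (open neighbourhoods), and invoke \cite{KwonLee2014}. Your separate treatment of $d=2$ is a small genuine improvement. Proposition~\ref{prop:classical-general} assumes $d>2t$, so the paper's argument does not literally cover $d=2$, whereas your explicit deterministic strategy (Alice flips $0$, Bob flips $1$) settles that case.
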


In the quantum setting, the projections need not be scalar. A real
Hadamard matrix provides an explicit rank-one solution of
\eqref{eq:one-step-projective-resolution}, and hence an explicit
coin field.

Let
\[
H=(h_{jk})_
{0\le j\le d-1,\ 1\le k\le d}
\]
be a real Hadamard matrix of order \(d\), so that
\[
h_{jk}\in\{\pm1\},
\qquad
HH^T=dI_d.
\]
On \(\mathcal K=\mathbb C^d\), with standard basis
\(\{e_k\}_{k=1}^d\), define
\begin{equation}
R_je_k
:=
h_{jk}e_k,
\qquad
u
:=
\frac1{\sqrt d}
\sum_{k=1}^d e_k.
\label{eq:hadamard-reflections}
\end{equation}
The operators \(R_j\) are commuting self-adjoint unitaries. For
\(\sigma\in\Gamma_{d-1}\), set
\begin{equation}
R_\sigma
:=
\prod_{j\in\sigma}R_j,
\qquad
Q_\sigma
:=
|R_\sigma u\rangle
\langle R_\sigma u|.
\label{eq:hadamard-projections}
\end{equation}

\begin{proposition}
\label{prop:hadamard-coin-field}
The projections
\[
\{Q_\sigma:\sigma\in\Gamma_{d-1}\}
\]
satisfy \eqref{eq:one-step-projective-resolution}. Thus
\begin{equation}
C_j^{(\sigma)}
:=
Q_{\sigma\triangle\{j\}},
\qquad
j\in N_{d-1},
\label{eq:hadamard-coin-system}
\end{equation}
defines a rendezvous-compatible one-step coin field. The
corresponding local update is
\begin{equation}
W_{\mathcal C^{(\sigma)}}^{[1]}
=
\sum_{j=0}^{d-1}
\Xi_j\otimes
Q_{\sigma\triangle\{j\}}.
\label{eq:hadamard-qbn-walk}
\end{equation}
\end{proposition}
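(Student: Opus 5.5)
The plan is to verify \eqref{eq:one-step-projective-resolution} directly from the orthogonality of the rows of $H$, and then to quote Theorem~\ref{thm:coin-resolution-equivalence} and Proposition~\ref{prop:t-coordinate-qbn-operator} for the remaining assertions.

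Fix $\sigma\in\Gamma_{d-1}$ and put $v:=R_\sigma u$. This is a unit vector because $R_\sigma$ is unitary. The $R_j$ commute and satisfy $R_j^2=I$, so
\[
R_{\sigma\triangle\{j\}}=R_jR_\sigma
\]
for every $j$. This holds both when $j\notin\sigma$ and when $j\in\sigma$, since in the second case the factor $R_j$ cancels. Consequently
\[
Q_{\sigma\triangle\{j\}}=|R_jv\rangle\langle R_jv|.
\]
It therefore suffices to show that $\{R_jv\}_{j=0}^{d-1}$ is an orthonormal basis of $\mathbb C^d$. A family of $d$ orthonormal vectors in a $d$-dimensional space is a basis, so the sum of the corresponding rank-one projections is $I$.

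The key computation is the inner product
\[
\langle R_jv,R_kv\rangle=\langle u,R_\sigma R_jR_kR_\sigma u\rangle=\langle u,R_jR_ku\rangle,
\]
where the $R_\sigma$ factors cancel by commutativity and $R_\sigma^2=I$. All $R_j$ are diagonal in the basis $\{e_k\}$ and $u$ has uniform entries, so
\[
\langle u,R_jR_ku\rangle=\frac1d\sum_{\ell=1}^d h_{j\ell}h_{k\ell}=\frac1d(HH^T)_{jk}=\delta_{jk}.
\]
This uses only that $H$ is real with $\pm1$ entries. It proves \eqref{eq:one-step-projective-resolution} for every $\sigma$. The only point needing care is the cancellation of $R_\sigma$, which reduces the claim at every $\sigma$ to the case $\sigma=\varnothing$.

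For the coin field, I apply Theorem~\ref{thm:coin-resolution-equivalence} with $t=1$ and $U_\sigma=I$ for all $\sigma$. This shows that $C_j^{(\sigma)}=Q_{\sigma\triangle\{j\}}$ is a rendezvous-compatible $1$-coin field. Substituting into \eqref{eq:input-conditioned-qbn-update}, with $\Xi_{\{j\}}=\Xi_j$, gives \eqref{eq:hadamard-qbn-walk}, and Proposition~\ref{prop:t-coordinate-qbn-operator} gives its unitarity. I expect no real obstacle here; the whole argument rests on the row orthogonality $HH^T=dI_d$.
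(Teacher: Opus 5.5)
Your proof is correct and takes essentially the same route as the paper. You cancel $R_\sigma$ to reduce $\langle R_{\sigma\triangle\{j\}}u,R_{\sigma\triangle\{k\}}u\rangle$ to $\frac1d(HH^T)_{jk}=\delta_{jk}$, conclude that the rank-one projections resolve the identity, and then invoke Theorem~\ref{thm:coin-resolution-equivalence} and Proposition~\ref{prop:t-coordinate-qbn-operator}, making explicit the identity $R_{\sigma\triangle\{j\}}=R_jR_\sigma$ and the choice $U_\sigma=I$ that the paper leaves implicit.
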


\begin{proof}
Fix \(\sigma\in\Gamma_{d-1}\). For distinct
\(j,\ell\in N_{d-1}\),
\begin{align*}
\left\langle
R_{\sigma\triangle\{j\}}u,
R_{\sigma\triangle\{\ell\}}u
\right\rangle
&=
\langle R_ju,R_\ell u\rangle\\
&=
\frac1d
\sum_{k=1}^d
h_{jk}h_{\ell k}\\
&=
0.
\end{align*}
Each vector has norm one. Hence
\[
\{R_{\sigma\triangle\{j\}}u:
j\in N_{d-1}\}
\]
is an orthonormal basis of \(\mathbb C^d\), and therefore
\[
\sum_{j=0}^{d-1}
Q_{\sigma\triangle\{j\}}
=
I.
\]
This proves \eqref{eq:one-step-projective-resolution}.

By Theorem~\ref{thm:coin-resolution-equivalence},
\eqref{eq:hadamard-coin-system} defines a rendezvous-compatible
one-step coin field. Proposition~\ref{prop:t-coordinate-qbn-operator}
then gives the update
\eqref{eq:hadamard-qbn-walk}.
\end{proof}

For the initial Bernoulli basis state \(Z_\sigma\), the operator
\eqref{eq:hadamard-qbn-walk} acts as
\[
W_{\mathcal C^{(\sigma)}}^{[1]}
(Z_\sigma\otimes\xi)
=
\sum_{j=0}^{d-1}
Z_{\sigma\triangle\{j\}}
\otimes
Q_{\sigma\triangle\{j\}}\xi,
\qquad
\xi\in\mathcal K.
\]
Thus the Hadamard projections determine simultaneously the possible
nearest-neighbor QBN updates and their coin-space components.

\begin{corollary}[One-step quantum advantage]
\label{cor:one-step-advantage}
If a real Hadamard matrix of order \(d\) exists, then
\begin{equation}
\omega_{\mathrm Q}(d)
=
\omega_{\mathrm{NS}}(d)
=
\frac2d.
\label{eq:Hadamard-quantum-value}
\end{equation}
If, in addition, \(d\) is not a power of two, then
\begin{equation}
\omega_{\mathrm C}(d)
<
\omega_{\mathrm Q}(d)
=
\frac2d.
\label{eq:one-step-strict-advantage}
\end{equation}
\end{corollary}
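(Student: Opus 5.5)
The corollary follows by combining Proposition~\ref{prop:hadamard-coin-field} with the saturation results of Section~\ref{sec:saturation-rigidity} and the classical criterion of Corollary~\ref{cor:classical-one-step}. The plan is to prove the two displayed claims in turn.

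For \eqref{eq:Hadamard-quantum-value}, suppose a real Hadamard matrix $H$ of order $d$ exists. Proposition~\ref{prop:hadamard-coin-field} then supplies a rank-one projective $1$-sphere resolution $\{Q_\sigma\}$ on $\mathcal K=\mathbb C^d$, together with the compatible coin field $C_j^{(\sigma)}=Q_{\sigma\triangle\{j\}}$. (Here we take $U_\sigma=I$ in Theorem~\ref{thm:coin-resolution-equivalence}.) Proposition~\ref{prop:resolution-sufficiency} with $t=1$ shows that the associated QBN update strategy $\mathbb P_{\mathfrak C}$, built from the state \eqref{eq:entangled-initial-state} and the Born rule \eqref{eq:qbn-born-correlation}, is a finite-dimensional quantum correlation. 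Its success probability is
\[
\omega_{d,1}(\mathbb P_{\mathfrak C})=\frac{\binom21}{\binom d1}=\frac2d=\omega_{\mathrm{NS}}(d).
\]
This gives $\omega_{\mathrm Q}(d)\ge 2/d$. The reverse inequality is the hierarchy \eqref{eq:value-hierarchy} together with Lemma~\ref{lem:NS-general}. So the supremum $\omega_{\mathrm Q}(d)$ equals $2/d$, and it is in fact attained.

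One edge case needs checking. Proposition~\ref{prop:resolution-sufficiency} as stated needs no hypothesis $d>2t$, but for $d=2$ we have $t=d/2$, so it is worth confirming that the counting there is still valid. At $t=1$ the number of ordered disjoint pairs is $d(d-1)$, which is correct for every $d\ge2$, so the formula holds. Moreover, real Hadamard matrices exist only for $d\in\{1,2\}$ or $4\mid d$. Hence in the second claim $d\ge 12$, and the condition $d>2$ holds automatically.

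For \eqref{eq:one-step-strict-advantage}, assume in addition that $d$ is not a power of two. Corollary~\ref{cor:classical-one-step} then gives $\omega_{\mathrm C}(d)\ne 2/d$. Since $\omega_{\mathrm C}(d)\le\omega_{\mathrm{NS}}(d)=2/d$ by \eqref{eq:value-hierarchy}, this forces $\omega_{\mathrm C}(d)<2/d=\omega_{\mathrm Q}(d)$, using the first part.

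The only real content lies in the cited results. The main point to verify is that the QBN construction produces a genuine element of $\mathcal Q_{d,1}$: a shared state with local projective measurements depending only on each player's own input. This holds because Alice's operator $W_{\mathcal C^{(\sigma)}}^{[1]}$ depends only on $\sigma$, Bob's conjugated operator depends only on $\tau$, and the position projections are local. The dependence of the initial state on $(\sigma,\tau)$ is only through local position preparation, which each player can do on their own.
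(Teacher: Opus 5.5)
Your proposal is correct and follows the paper's proof. The Hadamard construction of Proposition~\ref{prop:hadamard-coin-field} gives a finite-dimensional projective $1$-sphere resolution, Proposition~\ref{prop:resolution-sufficiency} then gives attainment of $2/d$ (with \eqref{eq:value-hierarchy} supplying the upper bound), and Corollary~\ref{cor:classical-one-step} gives the strict classical gap when $d$ is not a power of two. Your extra checks are all correct and simply make explicit what the paper leaves implicit: the counting at $d=2$, the fact that a non-power-of-two Hadamard order satisfies $d\ge 12>2$, and the fact that the QBN strategy is a genuine local quantum strategy.
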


\begin{proof}
Proposition~\ref{prop:hadamard-coin-field} gives a finite-dimensional
projective \(1\)- sphere resolution. Hence
Proposition~\ref{prop:resolution-sufficiency} yields
\[
\omega_{\mathrm Q}(d)
=
\omega_{\mathrm{NS}}(d)
=
\frac2d.
\]
If \(d\) is not a power of two,
Corollary~\ref{cor:classical-one-step} gives
\[
\omega_{\mathrm C}(d)<\frac2d.
\]
\end{proof}

We now turn to \(t\ge2\). In contrast with the \(t=1\) regime, the
projective sphere condition is obstructed in finite dimension when
\(d>2t\). Through
Theorem~\ref{thm:coin-resolution-equivalence}, this is equivalently
an obstruction to finite-dimensional rendezvous-compatible 
\(t\)-coin fields.

\begin{theorem}
\label{thm:higher-radius-obstruction}
Let \(t\ge2\) and \(d>2t\). Then no finite-dimensional projective
\(t\)-sphere resolution exists.
\end{theorem}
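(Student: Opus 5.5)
The plan is to turn a hypothetical projective \(t\)-sphere resolution into a scalar, translation- and permutation-invariant positive-definite function on \(\mathbb Z_2^d\). For that function the sphere condition and the orthogonality it forces are incompatible when \(t\ge2\) and \(d>2t\). I expect the two ingredients the introduction names, translation symmetrization and a rank argument, to enter in exactly this way.

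\emph{Step 1 (orthogonality from the resolution).} Suppose \(\{Q_\zeta\}\) satisfies \eqref{eq:projective-t-sphere} on \(\mathcal K\), with \(\dim\mathcal K=D\). For fixed \(\sigma\), the projections \(Q_{\sigma\triangle A}\), \(A\in\Omega_t\), sum to \(I\). Hence they are mutually orthogonal. Two vertices \(\zeta,\zeta'\) lie on a common \(t\)-sphere exactly when \(|\zeta\triangle\zeta'|=2s\) is even with \(2s\le 2t\): write \(\zeta\triangle\zeta'=A\triangle A'\) with \(|A\cap A'|=t-s\), and choose the base point \(\sigma=\zeta\triangle A\). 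Since \(d>2t\), every such distance is realized. Therefore
\[
Q_\zeta Q_{\zeta'}=0
\qquad\text{whenever }|\zeta\triangle\zeta'|\in\{2,4,\ldots,2t\}.
\]

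\emph{Step 2 (symmetrization).} Define
\[
g(x):=\frac{1}{2^d\,d!}\sum_{\pi\in S_d}\sum_{\zeta\in\Gamma_{d-1}}\operatorname{Tr}\bigl(Q_\zeta Q_{\pi(\zeta)\triangle \pi(x)}\bigr).
\]
Equivalently, first average \(\operatorname{Tr}(Q_\zeta Q_{\zeta\triangle x})\) over \(\zeta\), then symmetrize over coordinate permutations, under which the whole problem is invariant. The function \(g\) has the following properties.
\begin{enumerate}[(a)]
\item It depends only on \(w=|x|\); write \(g(x)=\gamma_w\).
\item \(\gamma_w\ge0\), and \(g\) is positive definite on \(\mathbb Z_2^d\), being an average of Gram-type kernels \(\operatorname{Tr}(Q_\zeta Q_{\zeta'})\).
\item \(\gamma_{2s}=0\) for \(1\le s\le t\), by Step 1.
\item \(\gamma_0=\bar r\), the average rank. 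By \eqref{eq:total-projective-rank}, \(\bar r=D/\binom dt\).
\end{enumerate}
Multiplying \eqref{eq:projective-t-sphere} (at base point \(\sigma=\zeta\triangle y\)) by \(Q_\zeta\), taking traces and averaging gives the linear identity
\[
\sum_{A\in\Omega_t} g(y\triangle A)=\bar r\qquad\text{for every }y\in\Gamma_{d-1}.
\]
In Fourier terms this says \(\hat g(S)K_t(|S|)=0\) for \(S\neq\varnothing\), so \(\hat g\) is supported on \(\{\varnothing\}\cup\{S: K_t(|S|)=0\}\).

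\emph{Step 3 (contradiction).} The aim is a \(y\) for which every \(y\triangle A\) has even weight in \([2,2t]\) except for terms we control. The identity then forces \(\bar r=0\), i.e.\ \(D=0\). For example, take \(|y|=t\): the weights \(|y\triangle A|\) range over the even numbers \(0,2,\ldots,2t\), and this only reproduces \(\gamma_0=\bar r\). So the real constraints come from \(y\) with \(|y|\not\equiv t\pmod 2\), together with positive definiteness. I will therefore set up the Delsarte-type linear system in the unknowns \(\gamma_w\) with \(w\) odd or \(w>2t\). Its constraints are: nonnegativity, the vanishing of \(\gamma_{2s}\) for \(1\le s\le t\), the sphere identities at every weight \(|y|\), and nonnegativity of the Krawtchouk transform. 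I then show it has no solution with \(\gamma_0>0\) once \(t\ge2\) and \(d>2t\).

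The case \(t=1\) must escape this argument, since Hadamard matrices give genuine solutions there. The distinction I expect to exploit is that for \(t\ge2\) the even weights \(2,4\) both vanish. Combined with the sphere identity at \(|y|=2\), whose weights are \(t-2,t,t+2\), this should pin down \(\gamma\) on the parity class of \(t\) and then propagate a contradiction.

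\emph{Main obstacle.} Producing this contradiction uniformly in \(d>2t\) is the hard part. If the pure linear-programming route stalls, I would first try to bring in integrality of ranks: \(\operatorname{rank}Q_\zeta\in\mathbb Z_{\ge0}\) and \(\sum_A \operatorname{rank}Q_{\sigma\triangle A}=D\), used together with the vanishing Fourier coefficients. Failing that, I would use a direct rank count. The orthogonality of Step 1 makes all \(Q_\zeta\) with \(\zeta\) in an even-distance clique of diameter \(\le 2t\), such as a ball of radius \(t\) in the even sublattice, pairwise orthogonal. Their ranks must then sum to at most \(D\), and comparing this bound with the average rank \(D/\binom dt\) via the translation average should give the contradiction.
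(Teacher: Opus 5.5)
Your Steps 1 and 2 are sound. The displayed formula for \(g\) mixes \(\zeta\) and \(\pi(\zeta)\), but your verbal definition (average \(\operatorname{Tr}(Q_\zeta Q_{\zeta\triangle x})\) over \(\zeta\), then over coordinate permutations) is the right one. The gap is Step 3: no contradiction is actually derived. You announce a Delsarte-type linear program and say it ``should'' pin down \(\gamma\). The inference that steers you there is also false. Your stated aim, a \(y\) for which every \(y\triangle A\) has even weight in \([2,2t]\), is exactly right. But after testing only \(|y|=t\), you conclude that \(y\) with \(|y|\equiv t\pmod 2\) give nothing beyond \(\gamma_0=\bar r\). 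Take instead \(|y|=t-2\), which exists precisely when \(t\ge2\). For \(A\in\Omega_t\),
\[
|y\triangle A|=2t-2-2|y\cap A|,\qquad 0\le|y\cap A|\le t-2,
\]
so every weight lies in \(\{2,4,\ldots,2t-2\}\). Every term of \(\sum_A g(y\triangle A)\) then vanishes by (c), and the sphere identity reads \(0=\bar r=D/\binom dt\). That is the whole proof; positive definiteness, Krawtchouk transforms and rank integrality are unnecessary. It does not even need symmetrization. For any \(\zeta\), put \(\sigma=\zeta\triangle y\). Every point of the \(t\)-sphere about \(\sigma\) lies at even distance in \([2,2t-2]\) from \(\zeta\), so \(Q_\zeta=\sum_A Q_\zeta Q_{\sigma\triangle A}=0\) by Step 1. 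This also explains why \(t=1\) escapes: there is no vertex of weight \(t-2\).

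Your fallback rank count is the paper's route, but it is incomplete at the only point that matters: exhibiting a pairwise-orthogonal family larger than \(\binom dt\). The example you name, a radius-\(t\) ball in the even sublattice, fails for odd \(t\). It contains no vertex at distance \(t\) from its centre, and for \(t=3\), \(d=7\) it has only \(1+\binom72=22<\binom73=35\) elements. The paper first symmetrizes by \(\widehat Q_\zeta=\bigoplus_{\sigma}Q_{\zeta\triangle\sigma}\), so that all ranks equal some \(r>0\) and \(\widehat D=\binom dt r\). It then uses \(\Omega_t\cup\Omega_{t-2}\), whose pairwise distances are even and at most \(2t\) (hence realized on common spheres since \(d>2t\)). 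This gives \(\widehat D\ge\bigl(\binom dt+\binom d{t-2}\bigr)r>\binom dt r\). That family is the \(t\)-sphere about \(\varnothing\) together with the weight-\((t-2)\) vertices, which is the same observation as the \(|y|=t-2\) identity above.
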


\begin{proof}
Suppose that
\[
\{Q_\zeta:\zeta\in\Gamma_{d-1}\}
\]
is a finite-dimensional projective \(t\)-sphere resolution on
\(\mathcal K\).

For \(\sigma\in\Gamma_{d-1}\), recall the QBN translation
\[
\Xi_\sigma
:=
\prod_{j\in\sigma}\Xi_j,
\qquad
\Xi_\sigma Z_\zeta
=
Z_{\zeta\triangle\sigma}.
\]
Using these translations, symmetrize the projective family by setting
\[
\widehat{\mathcal K}
:=
\bigoplus_{\sigma\in\Gamma_{d-1}}\mathcal K,
\qquad
\widehat Q_\zeta
:=
\bigoplus_{\sigma\in\Gamma_{d-1}}
Q_{\zeta\triangle\sigma}.
\]
Then
\(\{\widehat Q_\zeta:\zeta\in\Gamma_{d-1}\}\)
is again a projective \(t\)-sphere resolution. Indeed, for every
\(\tau\in\Gamma_{d-1}\),
\[
\sum_{A\in\Omega_t}
\widehat Q_{\tau\triangle A}
=
\bigoplus_{\sigma\in\Gamma_{d-1}}
\sum_{A\in\Omega_t}
Q_{\tau\triangle\sigma\triangle A}
=
I_{\widehat{\mathcal K}}.
\]

Since \(\zeta\triangle\sigma\) runs through
\(\Gamma_{d-1}\) as \(\sigma\) varies, the projections
\(\widehat Q_\zeta\) all have the same rank, say \(r>0\). Writing
\(\widehat D:=\dim\widehat{\mathcal K}\), the orthogonality of the
projections on each \(t\)-sphere gives
\begin{equation}
\widehat D
=
\binom dt r.
\label{eq:symmetrized-rank-identity}
\end{equation}

Now consider
\[
\mathcal J_t:=\Omega_t\cup\Omega_{t-2},
\]
corresponding to the Bernoulli basis vectors in the layers
\(V_t\) and \(V_{t-2}\). We claim that for any distinct
\(\zeta,\zeta'\in\mathcal J_t\), there exists
\(\tau\in\Gamma_{d-1}\) such that
\[
|\tau\triangle\zeta|
=
|\tau\triangle\zeta'|
=
t.
\]

Indeed, since \(\zeta\) and \(\zeta'\) have the same parity, write
\[
|\zeta\triangle\zeta'|=2s,
\qquad
1\le s\le t.
\]
Choose a set \(T\in\Omega_t\) containing exactly \(s\) elements of
\(\zeta\triangle\zeta'\) and \(t-s\) elements outside it. Such a
choice is possible because
\[
d-2s>2t-2s\ge t-s.
\]
Set
\[
\tau:=\zeta\triangle T.
\]
Then
\[
|\tau\triangle\zeta|=|T|=t,
\]
while
\[
|\tau\triangle\zeta'|
=
|T\triangle(\zeta\triangle\zeta')|
=
t.
\]
Thus \(\widehat Q_\zeta\) and
\(\widehat Q_{\zeta'}\) occur in the same projective \(t\)-sphere
resolution and are therefore orthogonal.

Hence
\[
\{\widehat Q_\zeta:\zeta\in\mathcal J_t\}
\]
is a pairwise orthogonal family of rank-\(r\) projections. It follows
that
\[
\widehat D
\ge
|\mathcal J_t|r
=
\left(
\binom dt+\binom d{t-2}
\right)r
>
\binom dt r,
\]
contradicting \eqref{eq:symmetrized-rank-identity}. Therefore no
finite-dimensional projective \(t\)-sphere resolution exists.
\end{proof}
\section{Conclusion}
\label{sec:conclusion}

This paper formulates one-step rendezvous on the hypercube as a two-agent QBN quantum walk with step length \(t\), and identifies projective \(t\)-sphere resolutions as the central structure governing exact rendezvous. The QBN shifts realize
fixed-cardinality Hamming updates, while reindexing the coin projections
by the common final vertex converts the rendezvous problem into a
geometric resolution problem on Hamming spheres. For \(d>2t\), scalar
\(t\)-sphere resolutions characterize classical saturation, while
projective \(t\)-sphere resolutions characterize finite-dimensional
quantum attainment of the no-signalling value. In the nearest-neighbor
case \(t=1\), classical saturation occurs exactly for \(d=2^m\), whereas
real Hadamard matrices yield quantum saturation; hence Hadamard orders
that are not powers of two give a strict quantum advantage. For
\(t\ge2\) and \(d>2t\), a rank obstruction rules out finite-dimensional
projective \(t\)-sphere resolutions.

The main insight is that exact quantum optimality is governed by the
geometry of the possible meeting vertices rather than by the local
update rule alone. The distinction between scalar and projective sphere
resolutions explains how quantum strategies can attain the
no-signalling optimum when classical strategies cannot, while the
higher-cardinality obstruction shows that this mechanism itself breaks
down once \(t\ge2\). It remains open whether
\(\omega_{\mathrm Q}(d,t)\) is then strictly below
\(\omega_{\mathrm{NS}}(d,t)\), and whether a strict quantum advantage
over classical strategies persists in this regime. These questions make
projective Hamming-sphere resolutions a natural structural framework
for studying the limits of quantum coordination on high-dimensional
hypercubes.
\section*{Author declarations}

\subsection*{Conflict of interest}

The author has no conflicts to disclose.

\subsection*{Data Availability}

Data sharing is not applicable to this article as no new data were created or analyzed in this study.
\appendix

\section{Proof of the no-signalling benchmark}
\label{app:no-signalling}

\begin{proof}[Proof of Lemma~\ref{lem:NS-general}]
Let \(\mathbb P(A,B\mid\sigma,\tau)\) be an arbitrary
no-signalling correlation.

For fixed \(\sigma\in\Gamma_{d-1}\) and \(A\in\Omega_t\), every
successful configuration is obtained by choosing
\(B\in\Omega_t\) with \(A\cap B=\varnothing\) and setting
\(\tau=\sigma\triangle A\triangle B\). There are
\(\binom{d-t}{t}\) such choices of \(B\).

By no-signalling,
\(\mathbb P(A,B\mid\sigma,\tau)
\le \mathbb P_{\mathrm A}(A\mid\sigma)\).
Hence
\[
\sum_{\sigma\in\Gamma_{d-1}}
\sum_{A\in\Omega_t}
\sum_{\substack{B\in\Omega_t\\A\cap B=\varnothing}}
\mathbb P
\left(
A,B
\mid
\sigma,
\sigma\triangle A\triangle B
\right)
\le
2^d\binom{d-t}{t}.
\]
Since the number of  ordered input pairs is
\(2^d\binom d{2t}\), we obtain
\begin{equation}
\omega_{\mathrm{NS}}(d,t)
\le
\frac{\binom{d-t}{t}}{\binom d{2t}}.
\label{eq:NS-upper-bound}
\end{equation}

We now construct a no-signalling correlation attaining this bound.
Set \(n:=\binom dt\). For each \(\eta\in\Omega_{2t}\), the map
\(A\mapsto\eta\setminus A\) is a bijection on
\(\{A\in\Omega_t:A\subseteq\eta\}\). Extend it arbitrarily to a
permutation \(\pi_\eta:\Omega_t\to\Omega_t\).

For an  input pair \((\sigma,\tau)\), write
\(\eta:=\sigma\triangle\tau\) and define
\[
\mathbb P(A,B\mid\sigma,\tau)
:=
\begin{cases}
\dfrac1n,
&
B=\pi_\eta(A),
\\[2mm]
0,
&
\text{otherwise}.
\end{cases}
\]
Since \(\pi_\eta\) is a permutation, both marginals are uniform:
\(\mathbb P_{\mathrm A}(A\mid\sigma)
=\mathbb P_{\mathrm B}(B\mid\tau)=1/n\). Thus the correlation is
no-signalling.

For fixed \(\eta\), success occurs precisely when
\(A\subseteq\eta\), in which case \(B=\eta\setminus A\). There are
\(\binom{2t}{t}\) such choices of \(A\). Therefore
\[
\omega_{\mathrm{NS}}(d,t)
\ge
\frac{\binom{2t}{t}}{\binom dt}.
\]
Using
\[
\frac{\binom{d-t}{t}}{\binom d{2t}}
=
\frac{\binom{2t}{t}}{\binom dt},
\]
together with \eqref{eq:NS-upper-bound}, gives
\[
\omega_{\mathrm{NS}}(d,t)
=
\frac{\binom{d-t}{t}}{\binom d{2t}}
=
\frac{\binom{2t}{t}}{\binom dt}.
\]

We also record the equality condition needed later. If a
no-signalling correlation attains this value, then equality must hold
termwise in the preceding upper bound. Hence, whenever
\(A\cap B=\varnothing\) and
\(\tau=\sigma\triangle A\triangle B\),
\begin{equation}
\mathbb P(A,B\mid\sigma,\tau)
=
\mathbb P_{\mathrm A}(A\mid\sigma)
=
\mathbb P_{\mathrm B}(B\mid\tau).
\label{eq:NS-saturation-equality}
\end{equation}
The second equality follows by interchanging the roles of the two
players.
\end{proof}


\section{Proof of the classical saturation proposition}
\label{app:classical-saturation}

We first establish the local estimate used in the proof. On the QBN
fixed-cardinality layer
\(V_t=\operatorname{span}\{Z_A:A\in\Omega_t\}\), define
\[
\mathsf D_tZ_A
:=
\sum_{\substack{B\in\Omega_t\\A\cap B=\varnothing}} Z_B.
\]
Set \(r_t:=\binom{d-t}{t}\) and
\(\mathbf 1:=\sum_{A\in\Omega_t}Z_A\).

\begin{lemma}
\label{lem:disjointness-operator}
Assume \(d>2t\). Then \(\|\mathsf D_t\|=r_t\), and the singular
subspace associated with the maximal singular value \(r_t\) is
\(\operatorname{span}\{\mathbf 1\}\). Consequently, for nonempty
\(X,Y\subseteq\Omega_t\),
\begin{equation}
\left\langle
\mathbf 1_X,
\mathsf D_t\mathbf 1_Y
\right\rangle
\le
r_t\sqrt{|X||Y|},
\label{eq:disjointness-bound}
\end{equation}
where \(\mathbf 1_X:=\sum_{A\in X}Z_A\). Equality holds if and only if
\(X=Y=\Omega_t\).
\end{lemma}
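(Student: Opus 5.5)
The operator $\mathsf D_t$ is the adjacency operator of the Kneser graph $K(d,t)$, realized on the QBN layer $V_t$. It is real symmetric, so its singular values are the absolute values of its eigenvalues. The plan is to show that the top eigenvalue $r_t$ is simple with eigenvector $\mathbf 1$, that every other eigenvalue has modulus strictly less than $r_t$, and then to deduce \eqref{eq:disjointness-bound} by Cauchy--Schwarz.

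\emph{Step 1: the top eigenvalue.} Every $A\in\Omega_t$ is disjoint from exactly $r_t=\binom{d-t}{t}$ sets $B\in\Omega_t$. Hence $\mathsf D_t$ is a symmetric nonnegative matrix with all row sums equal to $r_t$. This gives $\mathsf D_t\mathbf 1=r_t\mathbf 1$, and by the row-sum (Schur/Gershgorin) bound, $\|\mathsf D_t\|=r_t$.

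\emph{Step 2: simplicity and no eigenvalue $-r_t$.} I would use Perron--Frobenius. Irreducibility means the Kneser graph is connected when $d>2t$, and this follows from Proposition~\ref{prop:even-disjointness-chain}, which joins any $A,A'$ by a path. Connectedness makes the eigenvalue $r_t$ simple with eigenspace $\operatorname{span}\{\mathbf 1\}$. An eigenvalue $-r_t$ occurs exactly when the graph is bipartite. Proposition~\ref{prop:even-disjointness-chain} only supplies even walks, so I need an odd cycle. I would take a triangle $A,B,C$ of pairwise disjoint $t$-sets, which exists when $d\ge 3t$. Alternatively, and uniformly in the range $2t<d<3t$, I would invoke the known Kneser spectrum $(-1)^i\binom{d-t-i}{t-i}$, $0\le i\le t$, whose values for $i\ge1$ have modulus strictly below $r_t$ when $d>2t$. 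I expect this odd-cycle/non-bipartiteness step to be the main obstacle. If a self-contained argument is wanted, I would try building an odd closed walk in $K(d,t)$ from a "rotation" of $2t+1$ cyclically arranged coordinates: the sets $\{i,i+1,\dots,i+t-1\}$ and their shifts by $t$ modulo $2t+1$ form a $(2t+1)$-cycle. This requires only $d\ge 2t+1$, so the cyclic construction is the one I would actually use. Hence the maximal singular subspace is exactly $\operatorname{span}\{\mathbf 1\}$.

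\emph{Step 3: the inequality and its equality case.} By Cauchy--Schwarz and Step 1,
\[
\langle\mathbf 1_X,\mathsf D_t\mathbf 1_Y\rangle\le\|\mathbf 1_X\|\,\|\mathsf D_t\|\,\|\mathbf 1_Y\|=r_t\sqrt{|X||Y|}.
\]
For equality, both Cauchy--Schwarz and the operator-norm bound must be tight, so $\mathbf 1_Y$ must be a maximal singular vector. By Step 2 this means $\mathbf 1_Y\in\operatorname{span}\{\mathbf 1\}$, and since $\mathbf 1_Y$ is a nonempty 0/1 vector, $Y=\Omega_t$. Then $\mathsf D_t\mathbf 1_Y=r_t\mathbf 1$, and tightness of Cauchy--Schwarz forces $\mathbf 1_X\parallel\mathbf 1$, so $X=\Omega_t$. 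The converse is the direct count $\langle\mathbf 1,\mathsf D_t\mathbf 1\rangle=\binom dt r_t$.
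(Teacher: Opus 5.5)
Your proof is correct. It shares the paper's skeleton: the row-sum bound, simplicity of $r_t$ from connectivity via Proposition~\ref{prop:even-disjointness-chain}, exclusion of $-r_t$ by an odd closed walk, and Cauchy--Schwarz with the same equality analysis. Two steps are handled differently.

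First, where you cite Perron--Frobenius and the bipartiteness criterion for $-r_t$, the paper proves what it needs directly. It takes an eigenvector $f$ with $|\lambda|=r_t$ and looks at a coordinate of maximal modulus. Equality in the triangle inequality then forces $f_B=(\lambda/r_t)f_A$ for every disjoint pair, and the paper propagates this along the chains of Proposition~\ref{prop:even-disjointness-chain}. This keeps the lemma self-contained at the cost of a few lines; your version is shorter modulo the citation.

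Second, your remark that the proposition \emph{only supplies even walks} undersells it. Applied to a disjoint pair $A,B$ (so $m=t$), it gives a walk of length $2t$ from $A$ to $B$. Closing it with the edge $B$--$A$ gives a closed walk of odd length $2t+1$, valid uniformly for $d>2t$. This is exactly how the paper rules out $-r_t$: the even chain forces $f_B=f_A$, while the direct edge forces $f_B=-f_A$.

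Your cyclic-interval $(2t+1)$-cycle on $\mathbb Z_{2t+1}$ is also correct. It works because $\gcd(t,2t+1)=1$, and $S_i,S_{i+t}$ are disjoint since their union misses only the point $i-1$. It is a clean explicit certificate, but it is not needed, and neither are the triangle for $d\ge 3t$ or the Kneser-spectrum alternative.
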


\begin{proof}
Each row and column of the matrix of \(\mathsf D_t\) in the Bernoulli
basis contains exactly \(r_t\) entries equal to \(1\). Hence
\[
\|\mathsf D_t\|
\le
\sqrt{\|\mathsf D_t\|_1\|\mathsf D_t\|_\infty}
=
r_t.
\]
Since \(\mathsf D_t\mathbf 1=r_t\mathbf 1\), it follows that
\(\|\mathsf D_t\|=r_t\).

Let \(f=\sum_{A\in\Omega_t}f_AZ_A\) satisfy
\(\mathsf D_tf=\lambda f\) with \(|\lambda|=r_t\). Choose
\(A_0\in\Omega_t\) such that
\(|f_{A_0}|=\max_{A\in\Omega_t}|f_A|\). Then
\[
r_t|f_{A_0}|
=
\left|
\sum_{\substack{B\in\Omega_t\\A_0\cap B=\varnothing}}
f_B
\right|
\le
\sum_{\substack{B\in\Omega_t\\A_0\cap B=\varnothing}}
|f_B|
\le
r_t|f_{A_0}|.
\]
Thus equality holds throughout. Repeating the argument along the even
disjointness chains of Proposition~\ref{prop:even-disjointness-chain}
gives \(|f_A|=|f_{A_0}|\) for all \(A\in\Omega_t\).

If \(\lambda=r_t\), equality in the triangle inequality gives
\(f_A=f_B\) for every disjoint pair \(A,B\). The same chain argument
shows that all coefficients are equal, so
\(f\in\operatorname{span}\{\mathbf 1\}\).

If \(\lambda=-r_t\), then \(f_B=-f_A\) for every disjoint pair
\(A,B\). Choose such a pair. The direct disjointness transition gives
\(f_B=-f_A\), while an even disjointness chain between the same two
indices, supplied by Proposition~\ref{prop:even-disjointness-chain},
gives \(f_B=f_A\), a contradiction. Thus \(-r_t\) is not an
eigenvalue.

It follows that the maximal singular subspace is
\(\operatorname{span}\{\mathbf 1\}\).

Now
\[
\left\langle
\mathbf 1_X,
\mathsf D_t\mathbf 1_Y
\right\rangle
\le
\|\mathbf 1_X\|
\,
\|\mathsf D_t\mathbf 1_Y\|
\le
r_t
\|\mathbf 1_X\|
\,
\|\mathbf 1_Y\|.
\]
By the orthonormality of the Bernoulli basis,
\(\|\mathbf 1_X\|^2=|X|\) and
\(\|\mathbf 1_Y\|^2=|Y|\), which proves
\eqref{eq:disjointness-bound}.

If equality holds, then \(\mathbf 1_Y\) belongs to the maximal
singular subspace, so \(Y=\Omega_t\). Equality in the first
Cauchy--Schwarz inequality then gives \(X=\Omega_t\). The converse is
immediate.
\end{proof}

\begin{proof}[Proof of Proposition~\ref{prop:classical-general}]
Since the classical correlation set is the convex hull of deterministic
correlations and the success functional is linear, it suffices to
consider deterministic response functions
\(A(\sigma),B(\tau)\in\Omega_t\). Define
\(F(\sigma):=\sigma\triangle A(\sigma)\) and
\(G(\tau):=\tau\triangle B(\tau)\).

For each \(\zeta\in\Gamma_{d-1}\), set
\[
X_\zeta
:=
\{A\in\Omega_t:F(\zeta\triangle A)=\zeta\},
\qquad
Y_\zeta
:=
\{B\in\Omega_t:G(\zeta\triangle B)=\zeta\}.
\]
The correspondence between inputs and their final vertices gives
\begin{equation}
\sum_{\zeta\in\Gamma_{d-1}}|X_\zeta|
=
\sum_{\zeta\in\Gamma_{d-1}}|Y_\zeta|
=
2^d.
\label{eq:classical-fiber-sums}
\end{equation}

Let \(W_\zeta\) denote the number of winning ordered input pairs whose
common final vertex is \(\zeta\). Such a pair corresponds precisely to
\(A\in X_\zeta\), \(B\in Y_\zeta\), and
\(A\cap B=\varnothing\). Indeed, the corresponding inputs are
\(\sigma=\zeta\triangle A\) and \(\tau=\zeta\triangle B\), and since
\(A,B\in\Omega_t\),
\[
|\sigma\triangle\tau|=2t
\quad\Longleftrightarrow\quad
A\cap B=\varnothing.
\]
Therefore
\[
W_\zeta
=
\left\langle
\mathbf 1_{X_\zeta},
\mathsf D_t\mathbf 1_{Y_\zeta}
\right\rangle.
\]
By Lemma~\ref{lem:disjointness-operator},
\(W_\zeta\le r_t\sqrt{|X_\zeta||Y_\zeta|}\). Summing over
\(\zeta\), applying Cauchy--Schwarz, and using
\eqref{eq:classical-fiber-sums}, we obtain
\[
\sum_{\zeta\in\Gamma_{d-1}}W_\zeta
\le
2^dr_t.
\]
Since there are \(2^d\binom d{2t}\)  ordered input pairs,
\[
\omega_{\mathrm C}(d,t)
\le
\frac{\binom{d-t}{t}}{\binom d{2t}}
=
\omega_{\mathrm{NS}}(d,t).
\]

Suppose first that
\(\omega_{\mathrm C}(d,t)=\omega_{\mathrm{NS}}(d,t)\). Then equality
holds in both inequalities above. Equality in the global
Cauchy--Schwarz inequality, together with
\eqref{eq:classical-fiber-sums}, gives
\(|X_\zeta|=|Y_\zeta|\) for every \(\zeta\). For every \(\zeta\)
with these sets nonempty, equality must also hold in
\eqref{eq:disjointness-bound}. Hence
\(X_\zeta=Y_\zeta=\Omega_t\).

Call such a vertex \(\zeta\) active. For every active \(\zeta\),
\(F(\zeta\triangle A)=\zeta\) for all \(A\in\Omega_t\). Thus the
\(t\)-sphere
\[
\mathsf S_t(\zeta)
:=
\{\zeta\triangle A:A\in\Omega_t\}
\]
is mapped entirely to \(\zeta\). Since \(F\) is single-valued and
every input has a final vertex, the \(t\)-spheres centered at the
active vertices form a partition of \(\Gamma_{d-1}\).

Define
\[
q(\zeta)
:=
\begin{cases}
1,
&
\zeta\text{ is active},
\\[1mm]
0,
&
\text{otherwise}.
\end{cases}
\]
The partition property is equivalent to
\[
\sum_{A\in\Omega_t}
q(\sigma\triangle A)
=
1,
\qquad
\sigma\in\Gamma_{d-1}.
\]
Hence \(q\) is a scalar \(t\)-sphere resolution.

Conversely, suppose that
\(q:\Gamma_{d-1}\to\{0,1\}\) satisfies
\[
\sum_{A\in\Omega_t}q(\sigma\triangle A)=1,
\qquad
\sigma\in\Gamma_{d-1}.
\]
Then every input \(\sigma\) belongs to a unique \(t\)-sphere centered
at a vertex \(\zeta(\sigma)\) with \(q(\zeta(\sigma))=1\). The two
players agree in advance on the deterministic response rule
\[
A(\sigma)
:=
\sigma\triangle\zeta(\sigma)
\in\Omega_t.
\]

Fix \(\sigma\), and write \(A:=A(\sigma)\) and
\(\zeta:=\zeta(\sigma)\). For every
\(B\in\Omega_t\) with \(A\cap B=\varnothing\), define
\(\tau:=\sigma\triangle A\triangle B\). Then
\(|\sigma\triangle\tau|=2t\) and
\(\tau\triangle\zeta=B\). Thus \(\tau\) lies on the same active
\(t\)-sphere, so \(\zeta(\tau)=\zeta\) and \(A(\tau)=B\).
Consequently,
\[
\sigma\triangle A(\sigma)
=
\zeta
=
\tau\triangle A(\tau),
\]
and the players meet.

For fixed \(\sigma\), there are \(\binom{d-t}{t}\) choices of
\(B\) disjoint from \(A\), among \(\binom d{2t}\)  relative
sets. Hence
\[
\omega_{\mathrm C}(d,t)
=
\frac{\binom{d-t}{t}}{\binom d{2t}}
=
\omega_{\mathrm{NS}}(d,t).
\]
This proves the proposition.
\end{proof}

\end{document}